\renewcommand{\le}{\leqslant}
\renewcommand{\ge}{\geqslant}
\newcommand{\ol}{\overline}
\newcommand{\eps}{\varepsilon}
\newcommand{\emp}{\emptyset}
\newcommand{\Sig}{\Sigma}
\newcommand{\sig}{\sigma}
\newcommand{\noin}{\noindent}
\newcommand{\bi}{\begin{itemize}}
\newcommand{\ei}{\end{itemize}}
\newcommand{\be}{\begin{enumerate}}
\newcommand{\ee}{\end{enumerate}}
\newcommand{\bd}{\begin{description}}
\newcommand{\ed}{\end{description}}
\newcommand{\bq}{\begin{quote}}
\newcommand{\eq}{\end{quote}}
\newcommand{\cD}{{\mathcal D}}
\newcommand{\cE}{{\mathcal E}}
\newcommand{\ra}{{\rightarrow}}
\newcommand{\e}{\emph}
\newcommand{\co}{\colon}
\renewcommand{\phi}{\varphi}
\newcommand{\tr}[1]{\stackrel{#1}{\longrightarrow}}
\title{Most Complex Non-Returning Regular Languages\thanks{This work was supported by the Natural Sciences and Engineering Research Council of Canada 
grant No.~OGP0000871.}
}
\author{Janusz A. Brzozowski\inst{1} \and Sylvie Davies\inst{2}}
\authorrunning{J. A. Brzozowski,  S. Davies}  
\titlerunning{Most Complex Non-Returning Regular Languages}
\institute{David R. Cheriton School of Computer Science, University of Waterloo \\
Waterloo, ON, Canada N2L 3G1\\
{\tt brzozo@uwaterloo.ca}
\and
Department of Pure Mathematics, University of Waterloo \\
Waterloo, ON, Canada N2L 3G1\\
{\tt sldavies@uwaterloo.ca}
}
\begin{document}
\maketitle

\begin{abstract}
A regular language $L$ is non-returning if in the minimal deterministic finite automaton accepting it there are no transitions into the initial state.
Eom, Han and Jir\'askov\'a derived upper bounds on the state complexity of boolean operations and Kleene star, and proved that these bounds are tight using two different binary witnesses.
They derived upper bounds for concatenation and reversal using three different ternary witnesses. These five witnesses use a total of six different transformations. We show that for each $n\ge 4$ there exists a  ternary witness  of state complexity $n$ that meets the bound for reversal and that at least three letters are needed to meet this bound.
Moreover, the restrictions of this witness to  binary alphabets meet the bounds for product, star, and boolean operations. 
We also derive tight upper bounds on the state complexity of binary operations that take arguments with different alphabets. We prove that the maximal syntactic semigroup of a non-returning language has $(n-1)^n$ elements and requires at least $\binom{n}{2}$ generators. We find the maximal state complexities of atoms of non-returning languages. 
Finally, we show that there exists a most complex non-returning language that meets the bounds for all these complexity measures.
\medskip

\noin
{\bf Keywords:}
atom, boolean operation, concatenation, different alphabets, most complex, reversal,  regular language, star, state complexity, syntactic semigroup, transition semigroup, unrestricted complexity
\end{abstract}

\section{Introduction} 
Formal definitions are postponed until Section~\ref{sec:preliminaries};
we assume  the reader is familiar with basic properties of regular languages and finite automata as described in~\cite{Per90,Yu97}, for example.

A deterministic finite automaton (DFA) is \emph{non-returning} if there are no transitions into its initial state. 
A regular language is non-returning if its minimal DFA has that property. 
The \emph{state complexity} of a regular language $L$ is the number of states in
the minimal DFA accepting $L$. The state complexity of an \emph{operation} on regular languages is defined as the maximal state complexity of the result of the operation, expressed as a function of the state complexities of the operands.

The state complexity of common operations (union, intersection, difference, symmetric difference, Kleene star, reverse and product/concatenation) was studied by Eom, Han and Jir\'askov\'a~\cite{EHJ16}.
They pointed out that several subclasses of regular languages have the non-returning property; these subclasses include the class of suffix-free languages (suffix codes) and its subclasses  (for example,  bifix-free languages), and finite languages. 

A regular language $L_n(a,b,c)$ of state complexity $n$ is defined for all $n\ge 3$ in Figure~\ref{fig:regular}.
It was shown in~\cite{Brz13} that the sequence $(L_3(a,b,c),\dots, L_n(a,b,c),\dots)$ of these languages meets the upper bounds (for regular languages) on the complexities of all the basic operations on regular languages as follows: 
If $L(b,a)$ is $L(a,b)$ with the roles of $a$ and $b$ interchanged, then$L_m(a,b) \circ L_n(b,a)$ meets the bound $mn$ for all binary boolean operations $\circ$ that depend on both arguments; 
if $m\neq n$, $L_m(a,b) \circ L_n(a,b)$ meets the bound $mn$;
$(L_n(a,b))^*$ meets the bound $2^{n-1} + 2^{n-2}$; 
$(L_n(a,b,c))^R$ meets the bound $2^n$ for reversal;
and $L_m(a,b,c) L_n(a,b,c)$ meets the bound $(m-1) 2^n + 2^{n-1}$ for product.

\begin{figure}[ht]
\unitlength 7.5pt
\begin{center}\begin{picture}(37,5)(0,4)
\gasset{Nh=1.8,Nw=3.5,Nmr=1.25,ELdist=0.4,loopdiam=1.5}
	{\scriptsize
\node(0)(1,7){0}\imark(0)
\node(1)(8,7){1}
\node(2)(15,7){2}
}
\node[Nframe=n](3dots)(22,7){$\dots$}
	{\scriptsize
\node(n-2)(29,7){$n-2$}
	}
	{\scriptsize
\node(n-1)(36,7){$n-1$}\rmark(n-1)
	}
\drawloop(0){$c$}
\drawedge[curvedepth= .8,ELdist=.1](0,1){$a,b$}
\drawedge[curvedepth= .8,ELdist=-1.2](1,0){$b$}
\drawedge(1,2){$a$}
\drawloop(2){$b,c$}
\drawedge(2,3dots){$a$}
\drawedge(3dots,n-2){$a$}
\drawloop(n-2){$b,c$}
\drawedge(n-2,n-1){$a$}
\drawedge[curvedepth= 4.0,ELdist=-1.0](n-1,0){$a,c$}
\drawloop(n-1){$b$}
\drawloop(1){$c$}
\end{picture}\end{center}
\caption{Most complex regular language $L_n(a,b,c)$.}
\label{fig:regular}
\end{figure}
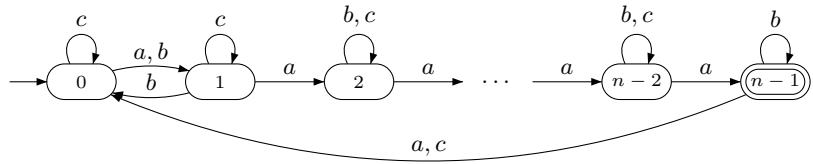

It was proposed in~\cite{Brz13} that the size of the \emph{syntactic semigroup} of a regular language is another worthwhile measure of the complexity of the language.
The syntactic semigroup is isomorphic to  the \emph{transition semigroup} of the minimal DFA of $L$, that is,  the semigroup of transformations of the state set of the DFA induced by non-empty words. 

Another complexity measure suggested in~\cite{Brz13} is the number and state complexities of the atoms of the language. An atom is a certain kind of intersection of complemented and uncomplemented quotients of $L$, where a quotient of $L\subseteq \Sig^*$ by a word $w\in \Sig^*$ is the language $w^{-1}L=\{x\mid wx \in L\}$.

It was shown in~\cite{Brz13} that the languages $L_n(a,b,c)$ not only meet the bounds on the state complexities of operations, but  also have the largest syntactic semigroups (of size $n^n$), and the largest number of atoms ($2^n$), all of which have the maximal possible state complexities. 
In this sense these are \emph{most complex} regular languages.

In this paper we show that there also exist most complex non-returning languages.
For each $n \ge 4$, we define a language of state complexity $n$.
We prove that the syntactic semigroup of this language has $(n-1)^n$ elements,  that it is generated by $\binom{n}{2}$ elements, and that the number of generators cannot be reduced. 
We also show that this language has $2^n$ atoms, all of which have maximal state complexity.
We demonstrate that the  upper bound on the state complexity of  reversal is met by a single ternary language, and  that no binary language meets this bound. Moreover,  restrictions of this language to binary alphabets meet the bounds for star, product and boolean operations.
This is in contrast to~\cite{EHJ16} where several types of witnesses are used to meet the various bounds.
Thus we correct an error in~\cite[Table 1]{EHJ16}, where it is stated that the upper bound on the complexity of product cannot be reached with binary witnesses.
Finally, we also  derive upper bounds on \emph{unrestricted} binary operations~\cite{Brz16}, that is, operations on languages over different alphabets, and show that our non-returning language meets these bounds. 

\section{Preliminaries}
\label{sec:preliminaries}

A \emph{deterministic finite automaton (DFA)} is a quintuple
$\mathcal{D}=(Q, \Sigma, \delta, q_0,F)$, where
$Q$ is a finite non-empty set of \emph{states},
$\Sigma$ is a finite non-empty \emph{alphabet},
$\delta\colon Q\times \Sigma\to Q$ is the \emph{transition function},
$q_0\in Q$ is the \emph{initial} state, and
$F\subseteq Q$ is the set of \emph{final} states.
We extend $\delta$ to a function $\delta\colon Q\times \Sigma^*\to Q$ as usual.
A~DFA $\mathcal{D}$ \emph{accepts} a word $w \in \Sigma^*$ if ${\delta}(q_0,w)\in F$. The language accepted by $\mathcal{D}$ is denoted by $L(\mathcal{D})$. If $q$ is a state of $\mathcal{D}$, then the language $L^q$ of $q$ is the language accepted by the DFA $(Q,\Sigma,\delta,q,F)$. 
A state is \emph{empty} if its language is empty. Two states $p$ and $q$ of $\mathcal{D}$ are \emph{equivalent} if $L^p = L^q$. 
A state $q$ is \emph{reachable} if there exists $w\in\Sigma^*$ such that $\delta(q_0,w)=q$.
A DFA is \emph{minimal} if all of its states are reachable and no two states are equivalent.

A \emph{nondeterministic finite automaton (NFA)} is a quintuple
$\mathcal{D}=(Q, \Sigma, \delta, I,F)$ where
$Q$,
$\Sigma$ and $F$ are defined as in a DFA, 
$\delta\colon Q\times \Sigma\to 2^Q$ is the \emph{transition function}, and
$I\subseteq Q$ is the \emph{set of initial states}. 

We use $Q_n=\{0,\dots,n-1\}$ as our basic set with $n$ elements.
A \emph{transformation} of $Q_n$ is a mapping $t\colon Q_n\to Q_n$.
The \emph{image} of $q\in Q_n$ under $t$ is denoted by $qt$, and this notation is extended to subsets of $Q_n$.
The \emph{rank} of a transformation $t$ is the cardinality of $Q_nt$.
If $s$ and $t$ are transformations of $Q_n$, their composition is  denoted $(qs)t$ when applied to $q \in Q_n$.
Let $\mathcal{T}_{Q_n}$ be the set of all $n^n$ transformations of $Q_n$; then $\mathcal{T}_{Q_n}$ is a monoid under composition. 

For $k\ge 2$, a transformation 
$t$ of a set $P=\{q_0,q_1,\ldots,q_{k-1}\} \subseteq Q_n$ is a \emph{$k$-cycle}
if $q_0t=q_1, q_1t=q_2,\ldots,q_{k-2}t=q_{k-1},q_{k-1}t=q_0$.
This $k$-cycle is denoted by $(q_0,q_1,\ldots,q_{k-1})$, and leaves the states in $Q_n\setminus P$ unchanged.
A~2-cycle $(q_0,q_1)$ is called a \emph{transposition}.
A transformation  that sends state $p$ to $q$ and acts as the identity on the remaining states is denoted by $(p \to q)$.
If a transformation of $Q_n$ has rank $n-1$, then there is exactly one pair of distinct elements $i,j \in Q_n$ such that $it = jt$.
We say a transformation $t$ of $Q_n$ is of \e{type $\{ i, j \}$} if $t$ has rank $n-1$ and $it = jt$ for $i < j$.

The 
\emph{syntactic congruence} of a language $L\subseteq \Sigma^*$ is defined on $\Sigma^+$ as follows:
For $x, y \in \Sigma^+,  x \,{\mathbin{\approx_L}}\, y $  if and only if  $wxz\in L  \Leftrightarrow wyz\in L$ for all  $w,z \in\Sigma^*.
$
The quotient set $\Sigma^+/ {\mathbin{\approx_L}}$ of equivalence classes of  ${\mathbin{\approx_L}}$ is a semigroup, the \emph{syntactic semigroup} $T_L$ of $L$.

Let $\mathcal{D} = (Q_{n}, \Sigma, \delta,  0, F)$ be a DFA. For each word $w \in \Sigma^*$, the transition function induces a transformation $\delta_w$ of $Q_n$ by  $w$: for all $q \in Q_n$, 
$q\delta_w = \delta(q, w).$ 
The set $T_{\mathcal{D}}$ of all such transformations by non-empty words is the \emph{transition semigroup} of $\mathcal{D}$ under composition~\cite{Pin97}. 
Sometimes we use the word $w$ to denote the transformation it induces; thus we write $qw$ instead of $q\delta_w$.
We extend the notation to sets: if $P\subseteq Q_n$, then $Pw=\{pw\mid p\in P\}$.
We also find  it convenient to write $P\stackrel{w}{\longrightarrow} Pw$ to indicate that the image of $P$ under $w$ is $Pw$.

If  $\mathcal{D}$ is a minimal DFA of $L$, then $T_{\mathcal{D}}$ is isomorphic to the syntactic semigroup $T_L$ of $L$~\cite{Pin97}, and we represent elements of $T_L$ by transformations in~$T_{\mathcal{D}}$. 
The size of this semigroup has been used as a measure of complexity~\cite{Brz13,BrYe11,HoKo04,KLS05}.

Atoms are defined by a left congruence, where two words $x$ and $y$ are equivalent whenever 
 $ux\in L$ if and only if  $uy\in L$ for all $u\in \Sigma^*$. 
 Thus $x$ and $y$ are equivalent whenever  $x\in u^{-1}L$ if and only if $y\in u^{-1}L$
 for all $u\in \Sigma^*$.
 An equivalence class of this relation is an \emph{atom} of $L$~\cite{BrTa14}. 
Thus an atom is a non-empty intersection of complemented and uncomplemented quotients of $L$. 
The number of atoms and their  state complexities were suggested as possible measures of complexity of regular languages~\cite{Brz13},
because all the quotients of a language, and also the quotients of atoms, are always unions of atoms
~\cite{BrTa13,BrTa14,Iva16}.

Suppose $\circ$ is a unary operation on languages, and $f(n)$ is an upper bound on the state complexity of this operation. 
If the state complexity of $(L_n)^\circ$ is $f(n)$, then  $L_n$  is called a \emph{witness} to the state complexity of $\circ$ for that $n$.
In general, we need a sequence $(L_k, L_{k+1},\dots, )$ of such languages; this sequence is called a \emph{stream}. 
 Often a stream does not start at 1 because the bound may not hold for small values of $n$.
For a binary operation we need two streams. The languages in a stream usually have the same form and differ only in the parameter $n$.

Sometimes the same stream can be used for both operands of a binary operation, but this is not always possible. For example,  for boolean operations when $m=n$, the state complexity of 
$L_n\cup L_n=L_n$ is $n$, whereas it should be $mn=n^2$.
However, in many cases the second language  is a "dialect" of the first, that is, it ``differs only slightly'' from the first. 
A \emph{dialect}  of $L_n(\Sig)$ is a language obtained from $L_n(\Sig)$  by  deleting some letters of $\Sigma$ in the words of $L_n(\Sig)$ -- by this we mean that words containing these letters are deleted -- or replacing them by letters of another alphabet $\Sigma'$.
In this paper we will encounter only two types of dialects:
\be
\item
A dialect in which some letters were deleted; for example, $L_n(a,b)$ is a dialect of $L_n(a,b,c)$ with $c$ deleted, and $L_n(a,-,c)$ is a dialect with $b$ deleted.
\item
 A dialect in which the roles of two letters are exchanged; for example, $L_n(b,a)$ is such a dialect of $L_n(a,b)$.
\ee
These two types of dialects can be combined, for example, in $L_n(a,-,b)$ the letter $c$ is deleted, and $b$ plays the role that $c$ played originally.
The notion of dialects also extends to DFAs; for example, if $\cD_n(a,b,c)$ recognizes $L_n(a,b,c)$ then $\cD_n(a,-,b)$ recognizes the dialect $L_n(a,-,b)$.

\section{Main Results}

From now on by \emph{complexity} we mean \emph{state complexity}.
If a word $w$ induces a transformation $t$ in a DFA, we denote this by $w\colon t$.

Let $\Gamma =\{a_{i,j} \mid 0\le i < j \le n-1\}$, where $a_{i,j}$ is a letter that induces any transformation of type $\{i,j\}$ and does not map any state to $0$. Let 
$\Gamma' = \Gamma \setminus \{ a_{0,n-1}, a_{0,1}, a_{1,n-1}, a_{1,2}\}$.
Let $\Sig = \{a,b,c,d\} \cup \Gamma'  $, where 
$a: (1,\dots,n-1) (0 \to 1)$,
$b:  (1,2) (0 \to 2)$, 
$c:  (2,\dotsc,n-1)(1 \to 2)(0 \to 1)$, and 
$d: (0 \to 2)$.
That is:
\bi
\item
$qa = q+1$ for $0 \le q \le n-2$, and $(n-1)a = 1$.
\item
$0b = 2$, $1b = 2$, $2b = 1$ and $qb = q$ for $q \not\in \{0,1,2\}$.
\item
$qc = q+1$ for $0 \le q \le n-2$, and $(n-1)c = 2$.
\item
$0d = 2$ and $qd = q$ for $q \ne 0$.
\ei
Note that $a$, $b$, $c$ and $d$ are transformations of types $\{0, n-1\}$, $\{0,1\}$, $\{1, n-1\}$ and  $\{0,2\}$, respectively.
Note also that $a$, $b$ and $c$ restricted to $Q_n\setminus \{0\}$ generate all the transformations of $\{1,\dots,n-1\}$.
This follows from the well-known fact that the full transformation semigroup on a set $X$ can be generated by the symmetric group on $X$ together with a transformation of $X$ with rank $|X|-1$. For $X = \{1,\dotsc,n-1\}$, we see that $\{(1,\dotsc,n-1),(1,2)\}$ (the restrictions of $a$ and $b$) generate the symmetric group, and $(2,\dotsc,n-1)(1 \ra 2)$ (the restriction of $c$) is a transformation of rank $|X|-1 = n-2$.

We are now ready to define a most complex non-returning DFA and language.

\begin{definition}
\label{def:most_complex}
For $n\ge 4$, let $\mathcal{D}_n=\mathcal{D}_n(\Sig)=(Q_n,\Sigma,\delta_n, 0, \{n-1\})$, where 
$\Sigma=\{a,b,c,d\} \cup \Gamma'$,
and $\delta_n$ is defined in accordance with the transformations described above.
 See Figure~\ref{fig:non_returning}.
Let $L_n=L_n(\Sig)$ be the language accepted by~$\mathcal{D}_n(\Sig)$.
\end{definition}

\begin{figure}[ht]
\unitlength 7.5pt
\begin{center}\begin{picture}(39,7)(0,4)
\gasset{Nh=1.8,Nw=3.5,Nmr=1.25,ELdist=0.4,loopdiam=1.5}
	{\scriptsize
\node(0)(1,7){0}\imark(0)
\node(1)(7,7){1}
\node(2)(15,7){2}
\node(3)(21,7){3}
}
\node[Nframe=n](3dots)(27,7){$\dots$}
	{\scriptsize
\node(n-2)(33,7){$n-2$}
	}
	{\scriptsize
\node(n-1)(39,7){$n-1$}\rmark(n-1)
	}
\drawedge(0,1){$a,c$}
\drawloop(1){$d$}
\drawedge[curvedepth=5,ELdist=.1](0,2){$b,d$}
\drawedge[curvedepth= .8,ELdist=.1](1,2){$a,b,c$}
\drawedge[curvedepth= .8,ELdist=-1.2](2,1){$b$}
\drawloop(2){$d$}
\drawedge(2,3){$a,c$}
\drawloop(3){$b,d$}
\drawedge(3,3dots){$a,c$}
\drawedge(3dots,n-2){$a,c$}
\drawloop(n-2){$b,d$}
\drawedge(n-2,n-1){$a,c$}
\drawedge[curvedepth= 4.0,ELdist=-1.0](n-1,1){$a$}
\drawedge[curvedepth= 2.0,ELdist=-1.0](n-1,2){$c$}
\drawloop(n-1){$b,d$}
\end{picture}\end{center}
\caption{Most complex non-returning language $L_n(\Sig)$ of Definition \ref{def:most_complex}. The letters in $\Gamma' = \Sig \setminus \{a,b,c,d\}$ are omitted.}
\label{fig:non_returning}
\end{figure}
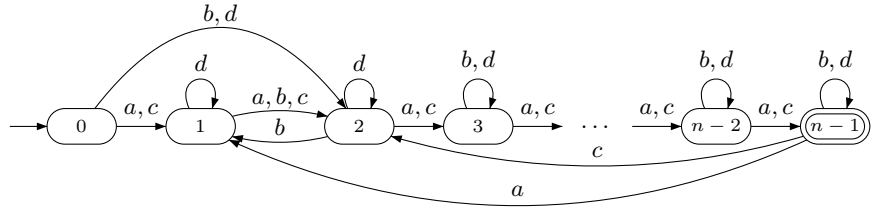

\begin{theorem}[Most Complex Non-Returning Languages]
\label{thm:leftideals}
For each $n\ge 4$, the DFA of Definition~\ref{def:most_complex} is minimal and non-returning. 
The stream $(L_n(\Sig) \mid n \ge 4)$  with some dialect streams
is most complex in the class of regular non-returning languages in the following sense:
\begin{enumerate}
\item
The syntactic semigroup of $L_n(\Sig)$ has cardinality $(n-1)^n$, and at least 
$\binom{n}{2}$ letters are required to reach this bound.
\item
Each quotient of $L_n(a)$ has complexity $n-1$, except $L$ itself, which has  complexity $n$.
\item
The reverse of $L_n(a,b,c)$ has complexity $2^n$,  and at least three letters are needed to meet this bound.
Moreover, $L_n(a,b,c)$ has $2^n$ atoms. 
\item
For each atom $A_S$ of   $L_n(\Sig)$, the complexity $\kappa(A_S)$ satisfies:\\
\begin{equation*}
	\kappa(A_S) \le
	\begin{cases}
		2^{n-1}, 			& \text{if $S\in \{\emp,Q_n\}$;}\\
		2+ \sum_{x=1}^{|S|} \sum_{y=1}^{|S|} \binom{n-1}{x}\binom{n-1-x}{y},
		 			& \text{if $\emp \subsetneq S \subsetneq Q_n$.}
		\end{cases}
\end{equation*}
 Moreover, at least $\binom{n}{2}$ letters are required to meet these bounds.

\item
The star of $L_n(a,b)$ has complexity $2^{n-1}$.
\item
	\begin{enumerate}
	\item
	Restricted  product:
	$\kappa(L_m(a,b) L_n(a,-,b)) = (m-1)2^{n-1}+1$.
	\item
	Unrestricted  product:
	For $m,n \ge 4$, let $L_m$ (respectively, $L_n$) be a non-returning language of complexity $m$ (respectively, $n$)  over an alphabet $\Sig'$, (respectively, $\Sig$). 
Then the complexity of product is at most $m2^{n-1}+1$, and this bound is met by
 $L_m(a,b)$ and $L_n(a,-,b,d)$.

	\end{enumerate}
\item
	\begin{enumerate}
	\item
	Restricted boolean operations:
	$\kappa(L_m(a,b)\circ L_n(b,a)) = mn-(m+n-2)$.
	\item
	In case $m\neq n$, $\kappa(L_m(a,b)\circ L_n(a,b)) = mn-(m+n-2)$.
	\item
	Unrestricted boolean operations:
	The complexity of $L_m(a,b,c) \circ L_n(b,a,d)$ 
	is $mn+1$ if $\circ\in \{\cup,\oplus\}$,
	that of  $L_m(a,b,c) \setminus L_n(b,a)$ is 
 	$mn-n+2$, and 
 	 of $L_m(a,b) \cap L_n(b,a)$ is $mn-(m+n-2)$. 
	\end{enumerate}
\end{enumerate}
All of these bounds are maximal for non-returning languages.
\end{theorem}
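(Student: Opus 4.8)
The plan is to verify each of the eight claims in turn, most of them by a careful but routine analysis of the transformations induced in $\mathcal{D}_n(\Sig)$. First I would establish the structural properties: minimality follows because state $q$ is reached from $0$ by $a^q$ (using $0a=1$, $1a=2$, \dots), and the states are pairwise distinguishable since only $n-1$ accepts the empty word and $a$ acts as a permutation on $\{1,\dots,n-1\}\cup\{0\}$ modulo the $0\to 1$ collapse, so shortest accepted words separate them; non-returning is immediate since no generator maps any state to $0$ (each of $a,b,c,d$ and every $a_{i,j}\in\Gamma'$ avoids $0$ in its image by construction).

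For item (1), the key observation is already stated in the excerpt: $a,b,c$ restricted to $\{1,\dots,n-1\}$ generate the full transformation monoid on that set, and since no transformation in the semigroup maps anything to $0$, the transition semigroup is contained in the monoid of transformations of $Q_n$ fixing $0$ set-theoretically outside the image, which has size $(n-1)^n$ (each of the $n$ states may go anywhere in $\{1,\dots,n-1\}$); the letters $d$ and the $a_{i,j}$ then let us hit the images of $0$ independently, so all $(n-1)^n$ transformations are realized. The lower bound of $\binom{n}{2}$ generators comes from a standard argument: in any generating set, the transformations of rank $n-1$ fall into types $\{i,j\}$, and one shows (as in the most-complex-regular-language literature) that for each of the $\binom{n}{2}$ pairs $\{i,j\}$ with the relevant indices one needs a distinct generator inducing a type-$\{i,j\}$ map to generate everything; this is where $\Gamma$ was designed to have exactly the right size. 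Item (2) is a direct quotient computation on the subautomaton using only $a$.

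For item (3), reversal: the reverse NFA has $2^n$ reachable-and-distinguishable subsets, proved by the usual reachability argument (start from $\{n-1\}$, use $a$ and $c$ to grow and shift subsets, $b$ to swap $1,2$) and distinguishability (each subset is separated by some word). That at least three letters are needed is the genuinely new combinatorial point: one argues that with a binary alphabet on $n-1$ "active" states one cannot reach all $2^n$ subsets in the reverse automaton of a non-returning DFA, adapting the counting argument of Eom--Han--Jir\'askov\'a; I expect this to require showing that two transformations plus the non-returning constraint force a dependency among reachable subsets. The atom count $2^n$ follows from Theorem-style results relating atoms to reachable subsets in the reverse automaton (each nonempty such subset, plus the empty one, gives an atom), so it is a corollary of the reversal analysis. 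Item (4) is the hardest calculation: for $\emp\subsetneq S\subsetneq Q_n$ one builds the atom DFA $\cA_S$ whose states are pairs $(X,Y)$ of disjoint subsets with $|X\cap S|$-type constraints, and one counts reachable states to get the displayed sum $2+\sum_{x=1}^{|S|}\sum_{y=1}^{|S|}\binom{n-1}{x}\binom{n-1-x}{y}$; the "$+2$" accounts for the empty and full quotient-states, and the double sum counts reachable $(X,Y)$ with $X,Y$ inside $\{1,\dots,n-1\}$ (state $0$ being special and non-returning). The $\binom{n}{2}$-letter lower bound here mirrors that of item (1).

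For items (5)--(7), the star and the (restricted and unrestricted) product and boolean operations, I would use the binary dialects indicated and run the standard cross-product / subset constructions. Star of $L_n(a,b)$: build the NFA for $L_n^*$, show $2^{n-1}$ reachable distinguishable states (the non-returning property removes the usual "$+2^{n-2}$" term, giving exactly $2^{n-1}$). Restricted product $L_m(a,b)L_n(a,-,b)$: the product DFA has states $(p,S)$ with $p\in Q_m$, $S\subseteq Q_n$; non-returning on the second automaton kills transitions into its state $0$, collapsing the count to $(m-1)2^{n-1}+1$, which one proves by exhibiting reachability and distinguishability. The unrestricted versions add one "dead-alphabet" state giving the "$+1$" or the shift from $(m-1)$ to $m$, handled by adjoining the letter $d$ to the second stream. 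Boolean operations: the direct product on $Q_m\times Q_n$ has $mn$ states, but in a non-returning setting the pair $(0,j)$ for $j\ne 0$ and $(i,0)$ for $i\ne 0$ are unreachable after the first letter, yielding $mn-(m+n-2)$ for operations depending on both arguments with the $(b,a)$-swap dialect (to avoid the $m=n$ collapse), and the unrestricted cases $mn+1$, $mn-n+2$ are obtained by tracking which of these corner states survive under $\cup,\oplus,\setminus$ respectively with the extra letter $d$. Finally, "all of these bounds are maximal for non-returning languages" is where one invokes the matching upper bounds of Eom--Han--Jir\'askov\'a (and, for the unrestricted operations and different-alphabet cases, the general upper bounds one proves in the paper's later sections), so this last sentence is a bookkeeping check that each witnessed value equals the proven ceiling. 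I expect item (4)'s reachability count and the three-letter necessity in item (3) to be the main obstacles; the rest is careful but standard automaton surgery.
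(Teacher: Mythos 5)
Your plan follows the same decomposition as the paper (reduce each item to a separate proposition, take the upper bounds from Eom--Han--Jir\'askov\'a, use the type-$\{i,j\}$ classification for the $\binom{n}{2}$ lower bound, and pass between reversal complexity and the number of atoms), but it defers exactly the steps that carry the mathematical content, so as it stands it has genuine gaps. The most serious is the three-letter necessity in item (3): you write that you ``expect this to require showing that two transformations plus the non-returning constraint force a dependency among reachable subsets,'' which is not an argument. The actual proof is a concrete case analysis: in a non-returning DFA every letter induces a transformation of rank at most $n-1$ and hence \emph{unifies} some pair of states; if $a$ unifies $\{i,j\}$ and $b$ unifies $\{k,\ell\}$, then depending on whether these pairs are equal, disjoint, or overlap in one element, one exhibits two disjoint atomic intersections $A_S$ and $A_T$ (with $S,T$ singletons or doubletons) each of which is sent to the sink state by \emph{both} letters, so each is contained in $\{\eps\}$ and at least one must be empty. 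Nothing in your sketch produces this dichotomy.

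Two further points. First, your justification of item (1) --- that $d$ and the $a_{i,j}$ ``let us hit the images of $0$ independently'' --- reflects a wrong model of how $N_n$ is generated: an arbitrary $t\in N_n$ is not obtained by choosing the image of $0$ after the fact, but by factoring $t=sr$ where $s$ is a generator of the \emph{same type} $\{i,j\}$ as $t$ (i.e.\ merging exactly the pair that $t$ merges, and mapping everything into $Q_n\setminus\{0\}$) and $r$ extends a transformation of $Q_n\setminus\{0\}$ realizing $(qs)\mapsto qt$; this is why a type-$\{i,j\}$ generator is needed for \emph{every} pair, not just for moving $0$. Second, the reachability arguments for star (item 5) and restricted product (item 6a) over the two-letter alphabet are the bulk of the technical work --- the product case in particular requires a lemma on inverses under $(ab)^k$ and a double induction on $|S|$ and on the number of elements below the smallest odd element of $S$ --- and ``standard automaton surgery'' does not substitute for them; with only $a$ and $b$ available, reaching all $(m-1)2^{n-1}$ sets $\{p'\}\cup S$ is delicate precisely because the non-returning constraint removes the usual freedom. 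Until these three pieces are supplied, the proposal is a correct outline of the paper's strategy rather than a proof.
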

\begin{proof}
From the definition of the letters of $\Sig$ it is obvious that the DFA $\cD_n$ is non-returning, and that any pair $(p,q)$ of states can be distinguished by the shortest word in $a^*$ accepted by $p$ but not by $q$.
\be
\item
This follows from Propositions~\ref{prop:necessary} and~\ref{prop:sufficient}.
\item
Obvious.
\item
By Proposition~\ref{prop:atomnum} the number of atoms of $L_n(a,b,c)$ is $2^n$. By~\cite{BrTa14} the complexity of the reverse is the same as the number of atoms.
 By Proposition~\ref{prop:3letters} at least three letters are required to meet this bound on the number of atoms and the complexity of reverse.
\item
See Propositions~\ref{prop:atombounds}, \ref{prop:atomboundsmet}, and ~\ref{prop:alphabet_atom_bounds}.
\item
See Proposition~\ref{prop:star}.
\item
See Propositions~\ref{prop:restricted_product} and~\ref{prop:unrestricted_product}.
\item
See Propositions~\ref{prop:restricted_boolean} 
and~\ref{prop:unrestricted_boolean}.
\ee
The rest of this paper is devoted to the proofs of the propositions  above.
\qed
\end{proof}

\section{Syntactic Semigroup}
For all basic operations on non-returning languages, the complexity bounds can be met with either binary or ternary witnesses~\cite{EHJ16}. However, to meet the bound for  the size of the syntactic semigroup, our most complex stream is forced to use an alphabet that grows quadratically in size.

Let $Q_n = \{0,1,\dotsc,n-1\}$.
For $n \ge 2$, let $N_n$ denote the semigroup of transformations of $Q_n$ such that $it \ne 0$ for all $i \in Q_n$.
We call $N_n$ the \e{full non-returning semigroup} of degree $n$.
We will show that a minimal generating set for $N_n$ must have size $\binom{n}{2}$.

\begin{proposition}
\label{prop:necessary}
If $G$ is a generating set for $N_n$, then $G$ contains a transformation of 
type $\{ i,j \}$ for each set $\{i,j\} \subseteq Q_n$.
In particular, a minimal generating set has exactly one element of type $\{i,j\}$ for each $\{i,j\} \subseteq Q_n$, and there are $\binom{n}{2}$ such elements.
\end{proposition}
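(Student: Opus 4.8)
The plan is to argue that every transformation in $N_n$ of rank $n-1$ is of some type $\{i,j\}$, and that no generating set can produce a transformation of type $\{i,j\}$ unless it already contains one of that type. The first claim is immediate from the definitions: a rank-$(n-1)$ transformation $t$ identifies exactly one pair of distinct states $i < j$ with $it = jt$, so $t$ is of type $\{i,j\}$; and there are $\binom{n}{2}$ choices of such a pair. So the substance of the proposition is the second claim, which I would phrase as: \emph{the only way to obtain a transformation of type $\{i,j\}$ as a product $s_1 s_2 \cdots s_k$ of elements of $N_n$ is for exactly one of the factors $s_\ell$ to have rank $n-1$ and type $\{i,j\}$, with all other factors being permutations of $Q_n$ that lie in $N_n$ — but there are no nontrivial such permutations.}

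First I would observe that rank is nonincreasing under composition: if $t = s_1 \cdots s_k$ has rank $n-1$, then every $s_\ell$ has rank $\ge n-1$, so each $s_\ell$ has rank $n$ (a permutation) or rank $n-1$. Next I would note that the only permutation of $Q_n$ lying in $N_n$ is the identity: a permutation $\pi \in N_n$ must satisfy $i\pi \ne 0$ for all $i$, but a permutation is surjective, so some $i$ has $i\pi = 0$ unless $n = 0$, a contradiction; hence $\pi$ fixes $0$ — but in fact we need more. Actually the cleanest route: since $i\pi \neq 0$ for every $i \in Q_n$ including $i=0$, the state $0$ is not in the image of $\pi$, contradicting surjectivity of a permutation on $Q_n$. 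Therefore $N_n$ contains \emph{no} permutations at all, and every element of $N_n$ has rank $\le n-1$. Consequently, in the product $t = s_1 \cdots s_k$ expressing a type-$\{i,j\}$ transformation, at least one factor has rank $n-1$; and if some factor had rank $\le n-2$ the whole product would, so all factors that are not rank $n-1$ would have to be permutations — impossible — forcing \emph{every} factor to have rank exactly $n-1$. Then I would show that composing two rank-$(n-1)$ elements of $N_n$ cannot give back rank $n-1$ unless... hmm, that is false in general, so I must be more careful.

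The correct finish is a counting / image-partition argument. A rank-$(n-1)$ transformation $t$ is determined by its kernel partition (one block of size $2$, the pair $\{i,j\}$ making its type, the rest singletons) and its image $Q_n t$, which is a set of size $n-1$ not containing... well, $0 \notin Q_n t$ is required by $N_n$, so $Q_n t = Q_n \setminus \{0\}$ is \emph{forced}. Thus every rank-$(n-1)$ element of $N_n$ has image exactly $Q_n \setminus \{0\}$. Now if $t = s_1 \cdots s_k$ with all $s_\ell$ of rank $n-1$ and image $Q_n\setminus\{0\}$, look at the first factor $s_1$: its type is some pair $\{i,j\}$, and I claim $t$ has the same type $\{i,j\}$. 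Indeed $\ker(t) \supseteq \ker(s_1)$ always, so $i,j$ are identified by $t$; since $t$ also has rank $n-1$ its kernel has a single nontrivial block, which must therefore be exactly $\{i,j\}$. Hence $t$ and $s_1$ have the same type. Now take $G$ a generating set for $N_n$: any transformation of type $\{i,j\}$ is a product of elements of $G$, the first factor $s_1$ of that product is an element of $G$, and by the above $s_1$ has type $\{i,j\}$. So $G$ contains an element of type $\{i,j\}$ for every pair, as required. For the ``minimal generating set'' clause: such a set needs $\ge \binom{n}{2}$ elements by the above, and conversely I would invoke the remark already made in the text — that $a, b, c$ restricted to $Q_n \setminus \{0\}$ generate all of the full transformation semigroup on $\{1,\dots,n-1\}$ — to argue that the $\binom{n}{2}$ type-$\{i,j\}$ letters $a_{i,j}$ (which by definition never map a state to $0$, hence lie in $N_n$) already generate $N_n$, so a minimal generating set has exactly $\binom{n}{2}$ elements, one of each type.

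The main obstacle is getting the image/kernel bookkeeping exactly right: the key simplifying fact that makes everything work is that \emph{every} rank-$(n-1)$ element of $N_n$ has the \emph{same} image $Q_n\setminus\{0\}$ (because $0$ can never be an image point), which both rules out permutations and pins down the kernel-propagation argument showing the type is inherited from the first factor. Once that observation is in hand the rest is routine; establishing it and phrasing the inductive/first-factor argument cleanly is where care is needed. The converse (that $\binom{n}{2}$ letters suffice) should follow quickly from the full-transformation-semigroup generation fact quoted before Definition~\ref{def:most_complex}, but I would double-check that the $a_{i,j}$, which only constrain ``no state maps to $0$,'' genuinely generate all of $N_n$ and not merely a submonoid.
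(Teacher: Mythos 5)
Your proof is correct and follows essentially the same route as the paper: $N_n$ contains no permutations, so the first factor of any product realizing a rank-$(n-1)$ transformation must itself have rank $n-1$, and the product inherits that factor's type (your kernel-containment phrasing is the paper's ``composition preserves type or lowers rank''). The observation that every rank-$(n-1)$ element of $N_n$ has image $Q_n\setminus\{0\}$ is true but not actually needed, and your deferral of the sufficiency of $\binom{n}{2}$ generators to the separate sufficiency result matches the paper's structure.
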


\begin{proof}
Suppose $t$ is a transformation of type $\{ i,j \}$, and let $t'$ be an arbitrary transformation. If $tt'$ has rank $n-1$, then $tt'$ has type $\{ i,j \}$. Indeed, since $it = jt$, it follows that $itt' = jtt'$.
Thus composing a transformation of type $\{ i,j \}$ with an arbitrary transformation either preserves the type, or lowers the rank.

Now, suppose $G$ generates $N_n$. 
Observe that $N_n$ does not contain transformations of rank $n$, since such transformations necessarily map some element to $0$.
Since composition with a transformation of type $\{ i,j \}$ either preserves type or lowers rank, the semigroup generated by $G$  contains only transformations that either have the same type as some element of $G$, or have rank less than $n-1$ and so are typeless.
But $N_n$ contains a transformation of type $\{ i,j \}$ for each $\{i,j\} \subseteq Q_n$. 
So if $G$ generates $N_n$, then $G$ must contain an element of type $\{ i,j \}$ for each $\{i,j\} \subseteq Q_n$. \qed
\end{proof}

This gives a necessary condition for $G$ to generate $N_n$. Now we give a sufficient condition.

\begin{proposition}
\label{prop:sufficient}
Let $G$ be a subset of $N_n$ that contains a transformation of type $\{ i,j \}$ for each set $\{i,j\} \subseteq Q_n$.
Let $G'$ be obtained by restricting every transformation in $G$ to $Q_n \setminus \{0\}$.
If $G'$ generates the symmetric group on $Q_n \setminus \{0\}$, then $G$ generates $N_n$.
\end{proposition}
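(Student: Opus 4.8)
The plan is to exploit the product structure of $N_n$. Write $X=Q_n\setminus\{0\}$. Since $\im t\subseteq X$ for every $t\in N_n$, attach to $t$ the pair $(\hat t,\ell(t))$, where $\hat t=t|_X\in\cT_X$ and $\ell(t)=0t\in X$; this assignment is a bijection $N_n\to\cT_X\times X$ (so $|N_n|=(n-1)^{n-1}\cdot(n-1)=(n-1)^n$), and from $x(st)=(xs)t$ together with $xs\in X$ one reads off the composition law $\widehat{st}=\hat s\,\hat t$ and $\ell(st)=\ell(s)\,\hat t$. In particular $t\mapsto\hat t$ is a semigroup homomorphism $N_n\to\cT_X$ that maps $\langle G\rangle$ onto $\langle G'\rangle$.

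First I would show $\langle G'\rangle=\cT_X$. The hypothesis gives $\mathrm{Sym}(X)\subseteq\langle G'\rangle$, and $G$ contains a transformation $g$ of some type $\{i,j\}\subseteq X$ (such a pair exists for $n\ge 3$); since $g$ has rank $n-1$ with $\im g=X$ and its unique collision $\{i,j\}$ not involving $0$, the state $0$ is the only preimage of $0g$, so $\hat g$ has image $X\setminus\{0g\}$, of size $|X|-1$. By the fact already quoted -- that $\cT_X$ is generated by $\mathrm{Sym}(X)$ together with any transformation of rank $|X|-1$ -- we get $\langle G'\rangle=\cT_X$. Consequently, for any $u\in G$ and any $\psi\in\cT_X$ there is $v\in\langle G\rangle$ with $\hat v=\psi$, and then $uv\in\langle G\rangle$ satisfies $\widehat{uv}=\hat u\,\psi$ and $\ell(uv)=\ell(u)\,\psi$. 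Hence it suffices to show that every target $(\phi,p)\in\cT_X\times X$ equals $(\hat u\,\psi,\ \ell(u)\,\psi)$ for some $u\in G$ and $\psi\in\cT_X$.

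Here the two kinds of type-$\{i,j\}$ letters play complementary roles. If $\phi$ is a permutation of $X$, take the type-$\{0,k\}$ letter $u$ with $k=p\phi^{-1}$: then $\hat u\in\mathrm{Sym}(X)$ and $\ell(u)=0u=ku=k\hat u$, so $\psi=\hat u^{-1}\phi$ gives $\hat u\psi=\phi$ and $\ell(u)\psi=k\phi=p$. If instead $\operatorname{rank}\phi\le n-2$, then $\phi$ is not injective, so $i\phi=j\phi$ for some $i\ne j$ in $X$; take the type-$\{i,j\}$ letter $u$. As above, $\hat u$ has rank $|X|-1$, its only nontrivial kernel block is $\{i,j\}$ (so $\ker\hat u\subseteq\ker\phi$, using $i\phi=j\phi$), and $\ell(u)=0u$ is the unique element of $X\setminus\im\hat u$. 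Define $\psi$ on $\im\hat u$ by $(x\hat u)\psi=x\phi$ (well defined by the kernel inclusion), set $\ell(u)\psi=p$ (permissible as $\ell(u)\notin\im\hat u$), and define $\psi$ arbitrarily on the remaining states of $X$; then $\hat u\psi=\phi$ and $\ell(u)\psi=p$. In either case $(\phi,p)\in\langle G\rangle$, so $\langle G\rangle=N_n$.

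The bookkeeping with the bijection and the composition law, and the two verifications, should be routine once the set-up is in place. The step I expect to be the real crux is the structural observation used in both cases: a type-$\{i,j\}$ letter with $\{i,j\}\subseteq X$ sends the initial state $0$ to the single value omitted by its restriction to $X$. This is exactly the degree of freedom needed to aim the image of $0$ at an arbitrary state while simultaneously realizing an arbitrary non-permutation of $X$ -- and it is why the ``$\{i,j\}\subseteq X$'' letters are indispensable, the ``$0\in\{i,j\}$'' letters alone being able to produce only permutations of $X$.
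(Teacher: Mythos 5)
Your proof is correct and follows essentially the same route as the paper: first show $G'$ generates all of $\mathcal{T}_{Q_n\setminus\{0\}}$ using a restricted type-$\{i,j\}$ letter with $i,j\neq 0$ as the rank-$(n-2)$ element, then realize an arbitrary $t\in N_n$ as (a generator whose type matches a collision of $t$) composed with (an element restricting to a prescribed map of $Q_n\setminus\{0\}$). Your two cases (permutation vs.\ non-permutation of $\hat t$) are subsumed by the paper's single uniform step, which defines $r'$ by $(qs)r'=qt$ and uses that $Q_ns=Q_n\setminus\{0\}$; the $\mathcal{T}_X\times X$ bookkeeping is a clean but inessential repackaging.
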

\begin{proof}
First, we show that $G'$  in fact generates the full transformation semigroup on $Q_n \setminus \{0\}$.
Recall that the full transformation semigroup on $X$ is generated by the symmetric group on $X$ together with a transformation of $X$ of rank $|X|-1$.
By assumption, $G'$ contains generators of the symmetric group on $Q_n \setminus \{0\}$.
Transformations in $G$ of type $\{ i,j \}$ with $0 < i,j$ have rank $n-1$, and furthermore their restrictions  to $Q_n\setminus \{0\}$  are of rank $n-2$.
Thus $G'$ contains generators of the symmetric group on $Q_n \setminus \{0\}$, as well as a transformation of rank $|Q_n\setminus \{0\}|-1 = n-2$; it follows that $G'$ generates the full transformation semigroup on $Q_n\setminus\{0\}$.

Now, we prove that $G$ generates every transformation in $N_n$.
Let $t$ be an element of $N_n$; we want to show that $t$ is in the semigroup generated by $G$.
Since $N_n$ does not contain any transformations of rank $n$, the transformation $t$ has rank less than $n$, and thus there exist distinct $i,j \in Q_n$ such that $it = jt$.
Select a transformation $s$ of type $\{ i,j \}$ in $G$.
Then for distinct $q,q' \in Q_n$, we have $qs = q's$ if and only if $\{q,q'\} = \{i,j\}$.

Hence there is a well-defined transformation $r'$ of $Q_n \setminus \{0\}$ given by $(qs)r' = qt$ for all $q \in Q$; it is well-defined since if we have $qs = q's$, then $\{q,q'\} = \{i,j\}$ and $is$ and $js$ get mapped to a common element $it = jt$.
The transformation $r'$ lies in the full transformation semigroup on $Q_n \setminus \{0\}$, and so it is in the semigroup generated by $G'$.
Hence there is some transformation $r$ of $Q_n$ in the semigroup generated by $G$ such that $r$ is equal to $r'$ when restricted to $Q_n \setminus \{0\}$.

Since $qs \in Q_n \setminus \{0\}$ for all $q \in Q_n$, it follows that $(qs)r = (qs)r' = qt$ for all $q \in Q_n$, and thus $sr$ and $t$ are equal as transformations.
Since $s$ is in $G$ and $r$ is in the semigroup generated by $G$, it follows $sr = t$ is in the semigroup generated by $G$.
Thus the semigroup generated by $G$ contains all elements of $N_n$; but $G$ is a subset of $N_n$, so $G$ generates $N_n$. \qed
\end{proof}

\section{Number and Complexities of Atoms}

Denote the complement of a language $L$ by $\ol{L} = \Sig^* \setminus L$.
Let $Q_n=\{0,\dots,n-1\}$ and let $L_n$ be a non-empty regular language with quotients $K = \{K_0,\dotsc,K_{n-1}\}$. Each subset $S$ of $Q_n$ defines an \emph{atomic intersection} $A_S = \bigcap_{i \in S} K_i \cap \bigcap_{i \in \ol{S}} \ol{K_i}$, where $\ol{S} = Q_n \setminus S$.
An \emph{atom} of $L$ is a non-empty atomic intersection; this definition is equivalent to that given in Section~\ref{sec:preliminaries} in terms of a left congruence. 
Note that if $S \ne T$, then $A_S \cap A_T = \emp$; hence atoms corresponding to distinct subsets of $Q_n$ are disjoint.
A language of complexity $n$ can have at most $2^n$ atoms, since there are $2^n$ subsets of $Q$. We show that this bound can be met by non-returning languages.
Additionally, we derive upper bounds on the complexities of atoms of non-returning languages, and show that our most complex stream meets these bounds.

We now describe a  construction due to Iv\'an~\cite{Iva16}.
Let $L$ be a regular language with DFA $\cD = (Q,\Sig,\delta,q_0,F)$. For each $S \subseteq Q$, we define a DFA $\cD_S = (Q_S,\Sig,\Delta,(S,\ol{S}),F_S)$ as follows. 
\bi
\item
$Q_S = \{(X,Y) : X,Y \subseteq Q, X \cap Y = \emp\} \cup \{\bot\}$. State $\bot$ is the \emph{sink state}.
\item
$\Delta((X,Y),a) = (Xa,Ya)$ if $Xa \cap Ya = \emp$, and otherwise $\Delta((X,Y),a) = \bot$; also $\Delta(\bot,a) = \bot$.
\item
$F_S = \{(X,Y): X \subseteq F, Y \subseteq \ol{F}\}$.
\ei
The DFA $\cD_S$ recognizes the atomic intersection $A_S$ of $L$; if it recognizes a non-empty language, then $A_S$ is an atom. We can determine the complexity of $A_S$ by counting reachable and distinguishable states in $\cD_S$.

\subsection{Number of Atoms}

\begin{proposition}
\label{prop:atomnum}
The language $L_n = L_n(a,b,c)$ has $2^n$ atoms.
\end{proposition}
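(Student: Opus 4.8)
The plan is to prove the sharper statement that \emph{every} atomic intersection $A_S$, for $S\subseteq Q_n$, is non-empty; since distinct atomic intersections are disjoint and there are exactly $2^n$ subsets of $Q_n$, this immediately yields $2^n$ atoms. To test non-emptiness I would use the reformulation that a word $w$ belongs to the quotient $K_i=L^i$ exactly when $\delta_n(i,w)\in F=\{n-1\}$; writing $\delta_w$ for the transformation of $Q_n$ induced by $w$, this gives
\[
	w\in A_S \iff \{\,q\in Q_n : q\delta_w = n-1\,\}=S ,
\]
so the problem reduces to realizing every subset $S\subseteq Q_n$ as the preimage $(n-1)\delta_w^{-1}$ for a suitable word $w$ (the empty word handles $S=\{n-1\}$). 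Equivalently, this is the statement that in the DFA $\cD_S$ a final state is reachable from $(S,\ol S)$.

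The key device is to write a candidate word as $w=xv$, where $x\in\{a,b,c\}$ is the first letter and $v$ is a word whose restriction to $Q':=\{1,\dots,n-1\}$ realizes a \emph{prescribed} transformation $\sigma$ of $Q'$; such $v$ exists because the restrictions of $a,b,c$ to $Q'$ generate $\mathcal{T}_{Q'}$, as noted after Definition~\ref{def:most_complex}. Since every letter maps $Q'$ into itself and maps $0$ into $\{1,2\}\subseteq Q'$, one has $q\delta_w=(q\delta_x)\sigma$ for all $q\in Q_n$; moreover $\delta_x$ acts on $Q'$ as the cycle $(1,\dots,n-1)$ if $x=a$, as the transposition $(1,2)$ if $x=b$, and, if $x=c$, as the rank-$(n-2)$ map whose non-empty fibres partition $Q'$ into $\{1,n-1\},\{2\},\{3\},\dots,\{n-2\}$; also $0\delta_a=0\delta_c=1$ and $0\delta_b=2$. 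Using this I would check that $S$ is realizable:
\begin{itemize}
\item with $x=a$ whenever $(0\in S)\Leftrightarrow(n-1\in S)$, choosing $\sigma$ with $\sigma^{-1}(n-1)=(\delta_a|_{Q'})(S\cap Q')$;
\item with $x=b$ whenever $(0\in S)\Leftrightarrow(1\in S)$, choosing $\sigma$ with $\sigma^{-1}(n-1)=(\delta_b|_{Q'})(S\cap Q')$;
\item with $x=c$ whenever $(1\in S)\Leftrightarrow(n-1\in S)$; here this condition is exactly what makes $S\cap Q'$ a union of fibres of $\delta_c|_{Q'}$, and since $1$ is not in the image of $\delta_c|_{Q'}$, one is free to include $1$ in $\sigma^{-1}(n-1)$ or not, which is what controls whether $0\mapsto n-1$.
\end{itemize}
In each case $\sigma$ is only required to be \emph{some} transformation of $Q'$ with a prescribed preimage of $n-1$, which exists as soon as $|Q'|\ge2$, i.e. $n\ge3$. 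The three partial conditions cover everything by the pigeonhole principle: since $0,1,n-1$ are pairwise distinct for $n\ge4$, two of the booleans $(0\in S),(1\in S),(n-1\in S)$ must agree, so at least one case applies to every $S\subseteq Q_n$.

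I expect the main obstacle to be conceptual rather than computational: spotting the decomposition $w=xv$ that decouples the action on state $0$ from the action on $Q'$, and handling the case $x=c$, where $\delta_c$ is not injective on $Q'$ and one must reason about unions of fibres and use that $1$ lies outside the image of $\delta_c|_{Q'}$. Once this framework is in place, verifying $(n-1)\delta_w^{-1}=S$ in each of the three cases is routine, and it is the pigeonhole observation that ties the three cases together into a proof covering all $2^n$ subsets.
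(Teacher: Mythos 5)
Your proof is correct and is essentially the same argument as the paper's: both reduce to showing every atomic intersection $A_S$ is non-empty, both use the pigeonhole principle on $\{0,1,n-1\}$ together with the fact that $a,b,c$ have types $\{0,n-1\}$, $\{0,1\}$, $\{1,n-1\}$ to pick a first letter whose non-singleton fibre is contained in $S$ or in $\ol{S}$, and both then invoke the generation of the full transformation semigroup on $Q_n\setminus\{0\}$ to finish. The only difference is cosmetic: you phrase the witness word via the preimage condition $(n-1)\delta_w^{-1}=S$, while the paper tracks the pair $(S,\ol S)$ through Iv\'an's DFA $\cD_S$ and drives it to the final state $(\{n-1\},\{1\})$.
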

\begin{proof}
We want to show that $A_S$ is an atom of $L_n$ for all $S \subseteq Q_n$.
It suffices to show for each $S$ that the DFA $\cD_S$ recognizes at least one word.
Then since atoms corresponding to different subsets of $Q_n$ are disjoint, this proves there are $2^n$ distinct atoms.

First, we show that from the initial state $(S,\ol{S})$, we can reach some state of the form $(X,Y)$ where $0 \not\in X$ and $0 \not\in Y$.
Consider the set $\{0,1,n-1\}$. Notice that for each subset $\{i,j\}$ of $\{0,1,n-1\}$, we have a transformation of type $\{i,j\}$: $a$ has type $\{0,n-1\}$, $b$ has type $\{0,1\}$, and $c$ has type $\{1,n-1\}$.
Additionally, by the pigeonhole principle, either $S$ contains two distinct elements from $\{0,1,n-1\}$, or $\ol{S}$ contains two distinct elements from $\{0,1,n-1\}$.

Suppose without loss of generality it is $S$ which contains two distinct elements from $\{0,1,n-1\}$.
Let $\{i,j\} \subseteq S$ for some $\{i,j\} \subseteq \{0,1,n-1\}$ with $i \ne j$.
Let $\sig \in \Sig$ be the letter inducing the transformation of type $\{i,j\}$.
Then we claim $(S,\ol{S})\sig \ne \bot$.
Indeed, suppose that $q \in S\sig \cap \ol{S}\sig$.
Then since $\sig$ is a transformation of type $\{i,j\}$, we must have $i\sig = j\sig = q$, and no other element is mapped to $q$.
But $\{i,j\} \subseteq S$, so we cannot have $q \in \ol{S}\sig$.

Hence $S\sig \cap \ol{S}\sig = \emp$.
Furthermore, since $\sig$ is a non-returning transformation, we have $0 \not\in S\sig$ and $0 \not\in \ol{S}\sig$.
Thus starting from the initial state $(S,\ol{S})$, we can apply $\sig$ to reach a state of the form $(X,Y)$ with $0 \not\in X$ and $0 \not\in Y$.

Now, recall that the three transformations $\{a,b,c\}$, when restricted to $Q_n \setminus \{0\}$, generate all transformations of $Q_n \setminus \{0\}$. Since $X \subseteq Q_n \setminus \{0\}$, there exists a transformation of $Q_n \setminus \{0\}$ that maps every element of $X$ to $n-1$ and every element of $(Q_n \setminus \{0\}) \setminus X$ to $1$. Let $w \in \{a,b,c\}^*$ be a word that induces this transformation when restricted to $Q_n \setminus \{0\}$. Since $Y \subseteq Q_n \setminus \{0\}$ and $Y$ is disjoint from $X$, it follows that $w$ maps every element of $Y$ to $1$.
Since $F_n = \{n-1\}$ is the final state set of $\cD_n$, we see that $Xw \subseteq F_n$ and $Yw \subseteq \ol{F_n}$. Thus $(Xw,Yw) = (\{n-1\},\{1\})$ is a final state of $\cD_S$.

This shows that there exists a word $\sig w \in \{a,b,c\}^*$ that maps the initial state $(S,\ol{S})$ of $\cD_S$ to a final state. Thus  $A_S$ is an atom. \qed
\end{proof}

Next, we prove that the bound on number of atoms cannot be met by a binary witness. 
From~\cite{BrTa14} we know that the number of atoms of a regular language is equal to the state complexity of the reversal of the language. 
Hence this also proves a conjecture from~\cite{EHJ16}, that a ternary witness is necessary to meet the bound for reversal of non-returning languages. 

\begin{proposition}
\label{prop:3letters}
Let $L$ be a non-returning language of complexity $n$ over $\Sig = \{a,b\}$. Then the number of atoms of $L$ is strictly less than $2^n$.
\end{proposition}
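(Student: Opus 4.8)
The plan is to analyze the structure of the transition monoid of a non-returning binary DFA and show it is too small to realize $2^n$ distinct atoms. The key observation is the one already exploited in Proposition~\ref{prop:necessary}: in a non-returning DFA no letter induces a permutation, so every letter $a,b$ induces a transformation of rank at most $n-1$, hence either has some fixed ``type'' $\{i,j\}$ (rank exactly $n-1$) or has rank at most $n-2$. Since the number of atoms equals the number of non-empty sets of the form $A_S$, and since (by the Iv\'an construction and the analysis in Proposition~\ref{prop:atomnum}) reachability of $(X,Y)$ in $\cD_S$ from $(S,\ol S)$ is driven by applying transformations induced by words of $\Sig^*$ to the pair $(S,\ol S)$, I would bound the number of reachable ``type patterns'' $(X,Y)$ with $0\notin X\cup Y$ and then count.

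First I would set up the dichotomy on the two letters. If both $a$ and $b$ have rank $\le n-2$, then after reading any non-empty word the image of $Q_n$ has size $\le n-2$; one shows in this case that the reversal DFA (equivalently the NFA obtained by reversing $\cD$) has at most $2^{n-1}$ reachable subsets — the initial state of $\cD$ is the only state with no in-transitions, so in the reverse NFA it is a source reachable only as the full set or not at all, and more carefully the subset-construction on the reverse can never produce both a set containing $0$ freely and... — here I would instead argue directly that the collection of quotients of the reverse has size $<2^n$. The cleaner route: use the bijection between atoms and reachable-and-distinguishable states of the reverse automaton's determinization, and bound reachable subsets. Since $0$ is non-returning, in the reverse NFA $\cD^R$ the state $0$ is a source (out-transitions only); the reachable subsets of $Q_n$ in $\det(\cD^R)$ are exactly the left quotients' index sets, and I claim $0$ belongs to a reachable subset only in a constrained way that kills a factor of $2$. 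Concretely, if a letter $x$ has type $\{0,j\}$, then in $\cD^R$ the preimage structure forces any reachable subset containing $0$ to also... — this is the step I expect to require the most care.

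The main obstacle is handling the case where one letter, say $a$, has type $\{0,j\}$ for some $j$ (so it does map two states together but one of them is the initial state), versus the case where both $a$ and $b$ have types avoiding $0$. In the latter case, every transformation induced by a non-empty word fixes the property ``$0$ is not in the image,'' so for every atom $A_S$ with $S\neq Q_n$ and $S\neq\emptyset$, reaching a final state of $\cD_S$ already constrains things; but more to the point, the transition monoid generated by two transformations of $Q_n\setminus\{0\}$ together with the forced behavior on $0$ has size bounded well below $(n-1)^n$, and by the atom-counting machinery (each atom needs a distinct reachable configuration), $2^n$ is unreachable. I would make this quantitative by noting: the number of atoms of $L$ is at most the number of distinct sets $\{\,Pw : w\in\Sig^*\,\}$-patterns plus one, and with only two generators of rank $\le n-1$ one cannot reach all $2^{n-1}$ nonempty subsets of $Q_n\setminus\{0\}$ as images while also achieving the $2$-fold freedom at state $0$ needed for the remaining atoms.

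Rather than push this counting through in full generality, I expect the actual proof to take the slicker high-level route: by~\cite{BrTa14} the number of atoms of $L$ equals $\kappa(L^R)$, so it suffices to show no non-returning binary $L$ has $\kappa(L^R)=2^n$. In $\cD^R$ (the NFA reversing $\cD$) the set of initial states is $F$ and $0$ has no outgoing... — wait, $0$ has no \emph{incoming} edges in $\cD$, hence no \emph{outgoing} edges in $\cD^R$. Thus in $\det(\cD^R)$, once the subset being tracked acquires the element $0$, reading any further letter drops $0$ (it contributes nothing) — so subsets containing $0$ are ``absorbing dead ends'' in the sense that their $a$- and $b$-successors coincide with those of the same subset minus $0$. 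Consequently the reachable subsets containing $0$ are at most as many as those not containing $0$ that have an in-edge under some letter landing on $0$; in particular $\det(\cD^R)$ has at most $2^{n-1}+2^{n-1}=2^n$ reachable states, with equality only if \emph{every} subset of $Q_n\setminus\{0\}$ is reachable \emph{and} each gives rise to a reachable ``$0$-augmented'' partner — and I would show the two-letter rank constraint forbids the former (reaching all $2^{n-1}$ subsets of an $(n-1)$-set in an NFA-determinization requires, essentially, enough branching, which two transformations one of which has rank $\le n-2$ cannot supply for $n\ge 4$, or else forces a returning structure). Tightening the ``equality is impossible'' step is where the real work lies; everything else is bookkeeping on the reverse automaton.
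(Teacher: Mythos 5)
Your proposal never actually closes. You correctly set up the reduction (the number of atoms equals $\kappa(L^R)$, and in the determinization of $\cD^R$ the subsets containing $0$ piggyback on those that do not, so the question becomes whether all $2^{n-1}$ subsets of $Q_n\setminus\{0\}$ together with all of their $0$-augmented partners can be reached), but the decisive step --- showing that equality is impossible --- is exactly the part you flag as ``where the real work lies'' and leave unproven. Worse, the one concrete mechanism you offer for it is wrong: you assert that two generators, one of rank at most $n-2$, cannot produce all $2^{n-1}$ subsets of an $(n-1)$-element set in a subset construction, but the paper's own star witness (Proposition~\ref{prop:star}) reaches every non-empty subset of $\{1,\dots,n-1\}$ using exactly two letters whose restrictions to $Q_n\setminus\{0\}$ have ranks $n-1$ and $n-2$. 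So the obstruction cannot be a crude branching or image-counting bound; it has to come from \emph{which} pairs of states the two letters merge.

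The paper's proof finds that obstruction directly and elementarily, without passing through the reversal. Since $L$ is non-returning, each of $a$ and $b$ has rank at most $n-1$ and hence \emph{unifies} some pair of states: say $a$ unifies $\{i,j\}$ and $b$ unifies $\{k,\ell\}$. A three-way case analysis on $|\{i,j\}\cap\{k,\ell\}|$ produces two distinct sets $S,T\subseteq Q_n$ (for instance $S=\{i\}$ and $T=\{j\}$ when the two pairs coincide, $S=\{i,k\}$ and $T=\{j,\ell\}$ when they are disjoint) such that in each of the Iv\'an automata $\cD_S$ and $\cD_T$ \emph{both} letters send the initial state to the sink; hence $A_S,A_T\subseteq\{\eps\}$, and since $A_S\cap A_T=\emp$ at least one of $A_S$, $A_T$ is empty, so $L$ has fewer than $2^n$ atoms. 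This pair-unification argument is the missing idea in your write-up; even if you insist on the reversal route, you would still need an equivalent of it to exhibit a subset of $Q_n$ that is unreachable in the determinization of $\cD^R$.
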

\begin{proof}
Let $\cD$ be the minimal DFA of $L$, with state set $Q_n$.
We introduce some special terminology for this proof, which generalizes the notion of transformations of type $\{i,j\}$.
We say that a transformation $t$ \e{unifies} $i$ and $j$, or unifies the set $\{i,j\}$, if $it = jt$.
For example, transformations of type $\{i,j\}$ unify $\{i,j\}$.
But furthermore, every transformation of $Q_n$ of rank $n-1$ or less unifies at least one pair of elements of $Q_n$.
The transition semigroup of $\cD$ cannot have transformations of rank $n$, since $L$ is non-returning; thus all the transformations in the transition semigroup must unify some pair of states.

Suppose that in $\cD$, the letter $a$ induces a transformation that unifies $\{i,j\}$, and $b$ induces a transformation that unifies $\{k,\ell\}$.
Assume also that $i \ne j$ and $k \ne \ell$.
We will show that at least one atomic intersection $A_S$ of $L$ is empty, and thus is not an atom.

Suppose $\{i,j\} = \{k,\ell\}$.
Let $S = \{i\}$ and consider the atomic intersection $A_S$.
The initial state of the DFA for $A_S$ is $(\{i\},\ol{S})$.
Note that $j \in \ol{S}$, so $ja \in \ol{S}a$.
But $a$ unifies $i$ and $j$, so $ja = ia \in \{i\}a$.
Thus since $\{i\}a \cap \ol{S}a \ne \emp$, the letter $a$ sends the initial state $(\{i\},\ol{S})$ to the sink state.
Since $b$ also unifies $i$ and $j$, the letter $b$ also sends $(\{i\},\ol{S})$ to the sink state. 
Thus $A_S$ is non-empty if and only if $(\{i\},\ol{S})$ is a final state.
In fact, either $A_S$ is non-empty or $A_S = \{\eps\}$, since every non-empty word sends the initial state $(\{i\},\ol{S})$ to the sink state.
If we let $T = \{j\}$, the same argument shows that $A_T$ is either empty or $A_T = \{\eps\}$.
But $A_S \cap A_T = \emp$, so one of $A_S$ or $A_T$ must be empty.

Now, suppose $\{i,j\} \cap \{k,\ell\} = \emp$.
Let $S = \{i,k\}$ and consider the atomic intersection $A_S$.
The initial state of the DFA for $A_S$ is $(\{i,k\},\ol{S})$ with $j,\ell \in \ol{S}$.
Thus similarly to before, the transformation $a$ which unifies $\{i,j\}$ and the transformation $b$ which unifies $\{k,\ell\}$ both send $A_S$ to a sink state.
So either $A_S$ is empty or $A_S = \{\eps\}$.
For $T = \{j,\ell\}$, the same argument shows that either $A_T$ is empty or $A_T = \{\eps\}$.
Hence as before, one of $A_S$ or $A_T$ is empty.

Finally, suppose $\{i,j\} \cap \{k,\ell\}$ has exactly one element.
Then either $k \in \{i,j\}$ or $\ell \in \{i,j\}$.
Assume without loss of generality that $\ell \in \{i,j\}$ and $\ell = i$; otherwise rename the elements so this is the case.
Then $a$ unifies $\{i,j\}$, and $b$ unifies $\{i,k\}$.
Let $S = \{i\}$ and consider $A_S$.
As before, the initial state of the DFA for $A_S$ is sent to a sink state by both $a$ and $b$.
Thus either $A_S$ is empty or $A_S = \{\eps\}$.
For $T = \{j,k\}$, the same argument shows that either $A_T$ is empty or $A_T = \{\eps\}$.
Hence one of $A_S$ or $A_T$ is empty.
\qed
\end{proof}

\subsection{Complexity of Atoms}
First, we derive upper bounds for the complexity of atoms of non-returning languages.
\begin{proposition}
\label{prop:atombounds}
Let $L$ be a non-returning language of complexity $n$, and let $Q_n$ be the state set of its minimal DFA. Let $S \subseteq Q_n$; then we have
\begin{equation*}
	\kappa(A_S) \le
	\begin{cases}
		2^{n-1}, 			& \text{if $S\in \{\emp,Q_n\}$;}\\
		2+ \sum_{x=1}^{|S|} \sum_{y=1}^{|S|} \binom{n-1}{x}\binom{n-1-x}{y},
		 			& \text{if $\emp \subsetneq S \subsetneq Q_n$.}
		\end{cases}
\end{equation*}
\end{proposition}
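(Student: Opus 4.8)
The plan is to count reachable and distinguishable states in the Iv\'an construction $\cD_S = (Q_S, \Sig, \Delta, (S,\ol{S}), F_S)$ for the minimal DFA $\cD$ of $L$. Since $L$ is non-returning, every transformation in the transition semigroup of $\cD$ has rank at most $n-1$, and more importantly no transformation maps any state to $0$; hence after applying any non-empty word, the components of a reachable state $(X,Y)$ are contained in $Q_n \setminus \{0\}$. This is the crucial structural observation that forces the bounds below the general $3^n+1$-type estimates for arbitrary languages.

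First I would handle the extreme cases $S \in \{\emp, Q_n\}$. Here the atom $A_\emp$ is (essentially) $\ol{L}$ and $A_{Q_n}$ corresponds to the intersection of all uncomplemented quotients; in either case one component of the initial state is empty and stays empty, so $\cD_S$ collapses to (a quotient of) the subset automaton on $Q_n \setminus \{0\}$ together with the initial state. The subset automaton on an $(n-1)$-element set has at most $2^{n-1}$ reachable states once we account for the fact that only subsets of $Q_n \setminus \{0\}$ are reachable after one step, and the initial state $(\emp, Q_n)$ or $(Q_n, \emp)$ either coincides with one of these or is already counted; a short argument gives the bound $2^{n-1}$.

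For $\emp \subsetneq S \subsetneq Q_n$ I would argue as follows. Every reachable state is either the initial state $(S,\ol{S})$, the sink $\bot$, or a state $(X,Y)$ obtained by applying a non-empty word, so $X, Y \subseteq Q_n \setminus \{0\}$, $X \cap Y = \emp$, and crucially $X \ne \emp$ and $Y \ne \emp$ (since the initial state has both $S$ and $\ol{S}$ nonempty, and one checks that a word producing $X = \emp$ would already have produced the sink via the disjointness failure coming from the rank drop, or that such states are unreachable because the relevant quotient behavior keeps both components populated; this needs care). Moreover applying a word to $(S,\ol{S})$ can only shrink or merge within images, so $|X| \le |S|$ and $|Y| \le |S|$ as well (for $Y$: $|Y| \le |\ol S|$, but combined with $X\cap Y=\emp$ and $|X\cup Y|\le n-1$, one gets the symmetric bound $|Y|\le |S|$ when $|S|\le |\ol S|$, and the whole statement is symmetric in $S \leftrightarrow \ol S$ so we may assume $|S| \le n/2$). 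Counting ordered pairs of disjoint nonempty subsets of an $(n-1)$-set with $1 \le |X| = x \le |S|$ and $1 \le |Y| = y \le |S|$ gives exactly $\sum_{x=1}^{|S|}\sum_{y=1}^{|S|}\binom{n-1}{x}\binom{n-1-x}{y}$; adding $2$ for the initial state and the sink yields the claimed bound.

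The main obstacle will be justifying precisely which states of the form $(X,Y)$ are genuinely reachable --- in particular, nailing down that $X$ and $Y$ both stay nonempty, and the bound $|X|, |Y| \le |S|$ (after the symmetry reduction), rather than the weaker $|X| \le |S|$, $|Y| \le |\ol S|$. The point is that $|X| + |Y| \le n-1$ because both sit inside $Q_n \setminus \{0\}$ and are disjoint, so if one of them were large the other is forced small; combining this with the monotonicity $|X|\le|S|$ and the $S \leftrightarrow \ol S$ symmetry of the construction should close the gap. I would also need to confirm that the initial state is not one of the counted $(X,Y)$ pairs with $X,Y \subseteq Q_n\setminus\{0\}$ (it generally contains $0$, so it is genuinely extra), and that $\bot$ is reachable (it is, whenever some letter unifies a pair split across $S$ and $\ol S$, which for a proper nonempty $S$ will hold for at least one transition). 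These reachability bookkeeping steps are the delicate part; the distinguishability side is not needed since we only want an upper bound on $\kappa(A_S)$.
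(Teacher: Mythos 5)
Your overall strategy is the same as the paper's: bound the number of reachable states of Iv\'an's DFA $\cD_S$, using the non-returning property to conclude that after any non-empty word both components of a state lie in $Q_n\setminus\{0\}$. The extreme cases $S\in\{\emp,Q_n\}$ are handled exactly as in the paper.

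In the case $\emp\subsetneq S\subsetneq Q_n$, however, there is a genuine gap: your claim that every reachable non-initial, non-sink state $(X,Y)$ satisfies $|Y|\le|S|$ is false. One only has $|Y|=|\ol{S}w|\le|\ol{S}|=n-|S|$, and states with $|Y|>|S|$ really are reachable: for $n=5$ and $S=\{0\}$, a letter of type $\{2,3\}$ that maps nothing to $0$ sends $(S,\ol{S})$ to a state with $|X|=1$ and $|Y|=3$. The symmetry reduction you propose does not repair this. The bound you are trying to prove is a function of $|S|$ that is not invariant under replacing $S$ by $\ol{S}$, so you may not assume $|S|\le n/2$ ``without loss of generality''; and even granting $|S|\le n/2$, the disjointness constraint $|X|+|Y|\le n-1$ only yields $|Y|\le n-2$, not $|Y|\le|S|$. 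The paper's proof instead uses the ranges $1\le x\le|S|$ and $1\le y\le n-|S|$ and arrives at $2+\sum_{x=1}^{|S|}\sum_{y=1}^{n-|S|}\binom{n-1}{x}\binom{n-1-x}{y}$; the inner summation limit $|S|$ in the displayed statement of the proposition is evidently a typographical error for $n-|S|$ (compare the proof of Proposition~\ref{prop:atomboundsmet}, which reaches all states with $1\le|Y|\le|\ol{S}|$, and note that only the $n-|S|$ version of the sum is invariant under $S\leftrightarrow\ol{S}$, as it must be since swapping components is a bijection between the reachable states of $\cD_S$ and those of $\cD_{\ol{S}}$). Two of your other worries are unnecessary: $X=Sw$ and $Y=\ol{S}w$ are automatically non-empty as images of non-empty sets under total transformations, and for an upper bound you need not show that $\bot$ is reachable---it suffices that the initial state and the sink together contribute at most $2$.
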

\begin{proof}
To obtain an upper bound on the complexity of the atomic intersection $A_S$, we find an upper bound on the number of reachable states in the DFA $\cD_S$.

Let $S = \emp$; then the initial state of $\cD_S$ is $(\emp,Q_n)$. From this state, we can only reach states of the form $(\emp,X)$ with $X \subseteq Q_n \setminus \{0\}$ and $X$ non-empty. The fact that $0 \not\in X$ follows from the non-returning property. There are $2^{n-1}-1$ non-empty subsets of $Q_n \setminus \{0\}$; adding 1 for the initial state gives the upper bound of $2^{n-1}$. For $S = Q_n$ we use a symmetric argument. Note that if $S = \emp$ or $S = Q_n$, the sink state $\bot$ is not reachable in $\cD_S$.

Now suppose $S \ne \emp$ and $S \ne Q_n$. The initial state of $\cD_S$ is $(S,\ol{S})$. By the non-returning property, this is the only state $(X,Y)$ of $\cD_S$ that has $0 \in X$ or $0 \in Y$. Hence all non-initial, non-sink reachable states have the form $(X,Y)$ where $1 \le |X| \le |S|$, $1 \le |Y| \le |\ol{S}| = n-|S|$, $X \cap Y = \emp$, $0 \not\in X$, and $0 \not\in Y$.

Suppose $|X| = x$. There are $\binom{n-1}{x}$ subsets of $Q_n \setminus \{0\}$ of size $x$.
Given a subset $X$, if we want to choose a set $Y \subseteq Q_n \setminus \{0\}$ which has size $y$ and is disjoint from $X$, there are $\binom{n-1-x}{y}$ choices: we must choose $y$ elements from $Q_n \setminus \{0\}$, but we cannot choose any of the $x$ elements from $X$, so there are really $n-1-x$ options to choose from. Hence the number of non-initial, non-sink reachable states $(X,Y)$ with $|X| = x$ and $|Y| = y$ is bounded by
\[ \binom{n-1}{x}\binom{n-1-x}{y}. \]
Since $|X|$ can range from $1$ to $|S|$ and $|Y|$ can range from $1$ to $n-|S|$, we take the sum over thse ranges to get a bound on the total number of non-initial, non-sink reachable states:
\[ \sum_{x=1}^{|S|}\sum_{y=1}^{n-|S|} \binom{n-1}{x}\binom{n-1-x}{y}. \]
Finally, add $2$ for the initial and sink states. This gives the stated bound. \qed
\end{proof}

Next, we prove our witness meets these bounds. First, we recall a lemma from~\cite{BrDa15}.
\begin{lemma}
\label{lem:dist}
Let $L$ be a regular language, and let $\cD_S$ be the DFA of the atomic intersection $A_S$ of $L$. Then:
\be
\item
States $(X,Y)$ and $(X',Y')$ of $\cD_S$ are distinguishable 
if $X \ne X'$ and $A_{X},A_{X'}$ are both atoms of $L$, or if $Y \ne Y'$ and $A_{\ol{Y}},A_{\ol{Y'}}$ are both atoms of $L$.
\item
If one of $A_X$ or $A_{\ol{Y}}$ is an atom of $L$, then $(X,Y)$ is distinguishable from $\bot$.
\ee
\end{lemma}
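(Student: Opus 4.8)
The plan is to route everything through two facts. The first is the correctness of Iv\'an's construction at the level of individual states: for every state $(X,Y)$ of $\cD_S$, writing $K_q$ for the language of state $q$ of $\cD$,
\[ L^{(X,Y)} = \bigcap_{i \in X} K_i \cap \bigcap_{j \in Y} \ol{K_j}, \qquad L^{\bot} = \emp, \]
with empty intersections read as $\Sig^*$. The second is the standard word-characterization of atoms: a word $w$ lies in the atomic intersection $A_R$ precisely when $R = \{\, i : w \in K_i \,\}$, so each word lies in exactly one atomic intersection, and $A_R$ is an atom iff some word ``selects'' exactly the index set $R$ among the quotients.

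First I would establish the state identity. The inclusion $L^{(X,Y)} \subseteq \bigcap_{i\in X}K_i \cap \bigcap_{j\in Y}\ol{K_j}$ is immediate from the definitions of $\Delta$ and $F_S$: a word $w$ accepted from $(X,Y)$ never enters $\bot$ and ends in $(Xw,Yw)$ with $Xw \subseteq F$ and $Yw \subseteq \ol F$, so $iw \in F$ for $i \in X$ and $jw \notin F$ for $j \in Y$. For the reverse inclusion the only delicate point is that a word $w$ in the right-hand side cannot drive $(X,Y)$ to $\bot$: a collision $iw' = jw'$ with $i \in X$, $j \in Y$ on a prefix $w'$ of $w$ would give $iw = jw$, hence $iw \in F \Leftrightarrow jw \in F$, contradicting $w \in K_i \cap \ol{K_j}$. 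This collision observation is the one mildly subtle step, and it is a single line.

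With the identity in hand, both parts are short. For Part 2, since $L^{\bot} = \emp$ it suffices to find a word in $L^{(X,Y)}$: if $A_X$ is an atom, take $w \in A_X$; then $w \in K_i$ exactly for $i \in X$, and $Y \subseteq \ol X$ (since $X \cap Y = \emp$) gives $w \in L^{(X,Y)}$. If instead $A_{\ol Y}$ is an atom, take $w \in A_{\ol Y}$; then $w \in K_i$ exactly for $i \notin Y$, and $X \subseteq \ol Y$ again gives $w \in L^{(X,Y)}$. For Part 1, suppose $X \ne X'$ with $A_X, A_{X'}$ atoms; pick $w \in A_X$ and $w' \in A_{X'}$. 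As above $w \in L^{(X,Y)}$ and $w' \in L^{(X',Y')}$, whereas $w \in L^{(X',Y')}$ would force $w \in K_i$ for all $i \in X'$, i.e. $X' \subseteq X$, and $w' \in L^{(X,Y)}$ would force $X \subseteq X'$; hence $L^{(X,Y)} \ne L^{(X',Y')}$. The case $Y \ne Y'$ with $A_{\ol Y}, A_{\ol{Y'}}$ atoms is the mirror image: use $X \subseteq \ol Y$ and $X' \subseteq \ol{Y'}$ to place the witness words and read off $Y' \subseteq Y$ and $Y \subseteq Y'$.

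I expect the collision observation in the state identity to be the most delicate step, but it is essentially one line. Beyond that the only care needed is bookkeeping — tracking which containment ($X' \subseteq X$ versus $X \subseteq X'$, and the $Y$-analogues) each witness word forces — together with the degenerate cases in which some of $X, Y, X', Y'$ are empty, where the relevant intersections collapse to $\Sig^*$ and the collision argument is vacuous.
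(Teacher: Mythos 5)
Your proof is correct: the state-language identity $L^{(X,Y)} = \bigcap_{i\in X}K_i \cap \bigcap_{j\in Y}\ol{K_j}$ (with the one-line collision argument for the reverse inclusion) together with the word-characterization of atoms cleanly yields both parts, and the containment bookkeeping in Part 1 is handled correctly. Note that the paper itself gives no proof of this lemma --- it is recalled from the cited reference \cite{BrDa15} --- and your argument is essentially the standard one used there, so there is nothing further to compare.
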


\begin{proposition}
\label{prop:atomboundsmet}
The atoms of the language $L_n = L_n(\Sig)$ meet the complexity bounds of Proposition \ref{prop:atombounds}.
\end{proposition}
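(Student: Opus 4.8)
The plan is to show that the DFA $\cD_S$ for each $A_S$ of $L_n(\Sig)$ has exactly as many reachable and pairwise distinguishable states as the upper bound of Proposition~\ref{prop:atombounds} allows. Distinguishability is essentially free: by Proposition~\ref{prop:atomnum}, \emph{every} subset of $Q_n$ induces an atom of $L_n$, so Lemma~\ref{lem:dist} immediately gives that any two distinct reachable states $(X,Y) \ne (X',Y')$ are distinguishable (if $X \ne X'$ use part~1 with the $X$-component; if $X = X'$ then $Y \ne Y'$, use the $Y$-component), and that every $(X,Y)$ is distinguishable from $\bot$. So the whole argument reduces to \textbf{reachability}: I must show that every state counted in the upper bound is actually reachable in $\cD_S$.

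I would split into the two cases of the bound. For $S = \emp$ (and symmetrically $S = Q_n$), the initial state is $(\emp, Q_n)$ and I must reach every $(\emp, X)$ with $\emp \subsetneq X \subseteq Q_n \setminus \{0\}$. First apply some non-returning letter to get $0$ out of the second component, landing in some $(\emp, X_0)$ with $X_0 \subseteq Q_n \setminus \{0\}$ nonempty; then use the fact (stated in the excerpt) that $a, b, c$ restricted to $Q_n \setminus \{0\}$ generate the \emph{full} transformation semigroup on $\{1,\dots,n-1\}$ — pick a transformation whose image is exactly the desired target set $X$, noting that any nonempty subset of $\{1,\dots,n-1\}$ arises as the image of some transformation. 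For $\emp \subsetneq S \subsetneq Q_n$, the target states are the pairs $(X,Y)$ with $1 \le |X| \le |S|$, $1 \le |Y| \le n - |S|$, $X \cap Y = \emp$, $0 \notin X \cup Y$, plus the initial and sink states. Here I would use the letters of $\Gamma'$: for each such target $(X,Y)$ I want a single letter $\sig \in \Sig$ (inducing a transformation of suitable type that avoids $0$) with $S\sig = X$ and $\ol{S}\sig = Y$ — this is possible because $|X| \le |S|$, $|Y| \le |\ol{S}|$, $X$ and $Y$ are disjoint subsets of $Q_n \setminus \{0\}$, so one can build a rank-$(n-1)$ transformation of the right type mapping $S$ onto $X$ and $\ol{S}$ onto $Y$; the restriction $\Gamma' = \Gamma \setminus \{a_{0,n-1}, a_{0,1}, a_{1,n-1}, a_{1,2}\}$ was chosen precisely so that $\{a,b,c,d\}$ already realize those four missing types, and one must check these four letters still cover every $(X,Y)$ requiring one of those types. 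Reaching $\bot$ is easy: from any $(X,Y)$ with $|X| \ge 2$ or $|Y| \ge 2$, apply a letter whose type-pair lies inside $X$ (or inside $Y$), forcing the two images to collide.

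The main obstacle I anticipate is the bookkeeping in the middle case: verifying that for \emph{every} admissible pair $(X,Y)$ there is a letter of $\Sig$ — either one of $\{a,b,c,d\}$ or one of the $a_{i,j} \in \Gamma'$ — inducing a transformation of the correct type $\{i,j\}$ (with $i$ or $j$ chosen among the "collision" elements of $S$ or $\ol{S}$) that maps $S$ onto $X$ and $\ol{S}$ onto $Y$ simultaneously, never hitting $0$, and handling correctly the boundary situations where $|S| = 1$ or $|\ol{S}| = 1$ or where the required type is one of the four excluded from $\Gamma'$. The counting identity — that the number of such $(X,Y)$ matches $\sum_{x=1}^{|S|}\sum_{y=1}^{n-|S|}\binom{n-1}{x}\binom{n-1-x}{y}$ — is exactly the derivation already done in Proposition~\ref{prop:atombounds}, so once every such pair is shown reachable and the $+2$ for initial and sink states is accounted for, the bound is met with equality, completing the proof.
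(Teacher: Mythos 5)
Your overall skeleton matches the paper's: distinguishability is free from Proposition~\ref{prop:atomnum} plus Lemma~\ref{lem:dist}, and everything reduces to counting reachable states, with the $S\in\{\emp,Q_n\}$ case handled as you describe. But your execution of the middle case $\emp\subsetneq S\subsetneq Q_n$ has a genuine gap: you try to reach each admissible $(X,Y)$ by a \emph{single letter} $\sig$ with $S\sig=X$ and $\ol{S}\sig=Y$. Every letter of $\Sig$ (the four named letters and every $a_{i,j}\in\Gamma'$) induces a transformation of rank $n-1$, i.e.\ injective except for one collision pair; hence $|S\sig|\ge |S|-1$ for any letter $\sig$. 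The bound requires reaching states with $|X|$ as small as $1$ while $|S|$ can be close to $n$, so for most targets no single letter (indeed no rank-$(n-1)$ transformation) can map $S$ onto $X$, and your justification ``one can build a rank-$(n-1)$ transformation\dots mapping $S$ onto $X$ and $\ol{S}$ onto $Y$'' is false whenever $|X|<|S|-1$ or $|Y|<|\ol{S}|-1$. The repair is to use \emph{words}: by part (1) of the theorem (Propositions~\ref{prop:necessary} and~\ref{prop:sufficient}) the transition semigroup of $\cD_n(\Sig)$ is the entire non-returning semigroup $N_n$, so \emph{any} transformation $t$ with $0\notin Q_nt$ is induced by some word, and one simply chooses $t$ (of whatever rank) mapping $S$ onto $X$ and $\ol{S}$ onto $Y$; this exists exactly because $X,Y$ are disjoint nonempty subsets of $Q_n\setminus\{0\}$ with $|X|\le|S|$, $|Y|\le|\ol{S}|$. (One should also observe that applying the word letter by letter cannot pass through $\bot$, since an intermediate collision between the two components would persist to the end, contradicting $X\cap Y=\emp$.) This is precisely the paper's argument, and it makes the case a one-liner rather than the type-by-type bookkeeping you anticipate.

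A second, smaller error: to reach $\bot$ you propose applying a letter whose type-pair lies \emph{inside} $X$ (or inside $Y$). That merges two elements of $X$ but keeps $X\sig$ and $Y\sig$ disjoint, so it does not reach $\bot$ at all. You need the collision pair to straddle the two components --- one element in $X$ and one in $Y$ --- or, as the paper does, simply apply a word inducing a constant transformation from the initial state $(S,\ol{S})$, which forces $St=\ol{S}t$ whenever both $S$ and $\ol{S}$ are nonempty.
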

\begin{proof}
By Proposition \ref{prop:atomnum}, $A_S$ is an atom of $L_n$ for each $S \subseteq Q_n$.
Thus by Lemma \ref{lem:dist}, for each $S \subseteq Q_n$, all distinct states of $\cD_S$ are distinguishable.
So it suffices to just show that the number of reachable states of each atom $A_S$ meets the bound.
Recall that the transformations in $\Sig$ generate the full non-returning semigroup $N_n$, which contains all transformations of $Q_n$ that do not map anything to $0$.

Suppose $S = \emp$. The initial state of $\cD_S$ is $(\emp,Q_n)$. We claim that $(\emp,X)$ is reachable for all non-empty $X \subseteq Q_n \setminus \{0\}$. Indeed, just let $w \in \Sig^*$ induce a transformation that maps $Q_n$ onto $X$; then $(\emp,Q_n)w = (\emp,X)$. Hence there are $2^{n-1}$ reachable states. A symmetric argument works for $S = Q_n$.

Suppose $S \ne \emp$ and $S \ne Q_n$. We see from the proof of Proposition \ref{prop:atombounds} that to show that the number of reachable states of $\cD_S$ meets the bound, it suffices to show that every state $(X,Y)$ with $1 \le |X| \le |S|$, $1 \le |Y| \le |\ol{S}|$, $X \cap Y$, $0 \not\in X$, and $0 \not\in Y$ is reachable, in addition to the sink state $\bot$. The sink state can be reached from the initial state $(S,\ol{S})$ using a constant transformation. To reach $(X,Y)$, simply use a transformation that maps $S$ onto $X$ and $\ol{S}$ onto $Y$; this is possible since $X$ and $Y$ are disjoint, and also since $X,Y \subseteq Q_n \setminus \{0\}$ and we have all transformations of $Q_n$ that do not map anything to $0$. Hence the number of reachable states matches the bound. \qed
\end{proof}

We used the full alphabet of our witness in the previous proof. The following proposition shows that this was necessary.

\begin{proposition}
\label{prop:alphabet_atom_bounds}
Let $L$ be a non-returning language over $\Sig$ of complexity $n$. If the atoms of $L$ meet the bounds of Proposition \ref{prop:atombounds}, then $\Sig$ has size at least $\binom{n}{2}$.
\end{proposition}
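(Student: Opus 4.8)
The plan is to show that each of the $\binom{n}{2}$ types $\{i,j\}$ with $\{i,j\}\subseteq Q_n$ must be realized by at least one letter of $\Sig$, which immediately gives $|\Sig|\ge\binom{n}{2}$. The strategy mirrors the idea in the proof of Proposition~\ref{prop:3letters}: if some type is \emph{missing} from the transition semigroup of the minimal DFA of $L$, then a certain atom fails to achieve the complexity bound of Proposition~\ref{prop:atombounds}, contradicting the hypothesis. So I would argue the contrapositive: assume no letter of $\Sig$ has type $\{i,j\}$ for some fixed pair $\{i,j\}\subseteq Q_n$, and derive that either $A_{\{i\}}$ or some small atom has strictly fewer reachable (or distinguishable) states in $\cD_{\{i\}}$ (or the relevant $\cD_S$) than the bound.

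First I would recall that since $L$ is non-returning, every transformation in the transition semigroup $T$ has rank at most $n-1$, hence each unifies at least one pair of states; composing a transformation of type $\{i,j\}$ with anything either preserves the type or drops the rank below $n-1$ (this is exactly the observation from Proposition~\ref{prop:necessary}). So the set of types occurring in $T$ is exactly the set of types occurring among the generators in $\Sig$. Now suppose type $\{i,j\}$ does not occur. Consider the atom $A_S$ for a suitably chosen small $S$ — the natural first candidates are $S=\{i\}$ and $S=\{j\}$ (with $i<j$ say). In the DFA $\cD_S$ the initial state is $(\{i\},\ol{\{i\}})$, and I want to show that a whole family of states $(X,Y)$ that are counted in the bound of Proposition~\ref{prop:atombounds} are in fact unreachable. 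The key point is that from $(\{i\},\ol{S})$ with $j\in\ol S$, the only way to avoid collapsing to $\bot$ is to apply a transformation that does not unify $i$ with $j$ (and more generally doesn't unify $X$ with $Y$ at any step). I would track which pairs can be "safely separated": a transformation $t$ fails to send $(\{i\},\ol S)$ to $\bot$ only if $it\notin \ol St$, i.e. $t$ does not unify $i$ with any element of $\ol S = Q_n\setminus\{i\}$; but a rank-$\le n-1$ transformation unifies \emph{some} pair, and if that pair is not available as a type $\{i,j\}$ for the $j$ that sits in $\ol S$... here I need to be careful, because $\ol S$ is almost all of $Q_n$, so essentially the first letter applied must unify a pair disjoint from... no: it must unify a pair not of the form $\{i,\ell\}$. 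The cleaner formulation: we can reach a non-$\bot$ state from the initial state using letter $\sigma$ iff $\sigma$ unifies a pair $\{k,\ell\}$ with $\{k,\ell\}\ne\{i,\ell'\}$ for all $\ell'$, i.e. $i\notin\{k,\ell\}$; and thereafter the reachable $(X,Y)$ have $|X|\le 1$ persisting only so long as we never merge. I would then count: the number of $(X,Y)$ reachable is bounded by something strictly smaller than $2+\sum_{x,y}\binom{n-1}{x}\binom{n-1-x}{y}$ precisely because certain $X$'s of size $>1$ can never appear (you need a type-$\{i,j\}$ move to grow $X$ from $\{i\}$ by colliding, but collisions go to $\bot$), or alternatively certain $(X,Y)$ with specific $X,Y$ straddling the missing pair are unreachable. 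Running the same count for $A_{\{j\}}$ and using disjointness of atoms as in Proposition~\ref{prop:3letters} lets me conclude one of the two bounds is not met.

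I expect the main obstacle to be pinning down \emph{exactly which} atom $A_S$ and which reachable-state count breaks, and making the counting argument airtight rather than hand-wavy — the bound in Proposition~\ref{prop:atombounds} is a sum over all disjoint pairs $(X,Y)$ with $X,Y\subseteq Q_n\setminus\{0\}$, and I must exhibit a concrete nonempty subfamily that is provably unreachable in $\cD_S$ whenever type $\{i,j\}$ is absent. The right move is probably not to look at singleton $S$ but at $S$ chosen so that both $i,j\in S$ or both $i,j\in\ol S$: then in $\cD_S$, to reach a state $(X,Y)$ with $i\in X$ and $j\in Y$ (one of the states counted in the bound), some letter applied along the way must separate $i$ from $j$ while everything before kept them together — but the obstruction is subtler since other letters can move $i$ and $j$ around. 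I think the cleanest line is: show that the reachable states $(X,Y)$ of $\cD_S$ are closed under "if $i,j$ are both in $X$ or both in $Y$ at some point, they stay unified-or-both-present forever" unless a type-$\{i,j\}$ transformation is applied — and since that never happens, the states with $i\in X, j\in Y$ (or vice versa) that require such a split are unreachable, cutting the count below the bound. I would then verify that at least one such "forbidden" $(X,Y)$ is among those counted by the double sum (which holds as long as $n\ge 4$ so that there is room for the extra coordinates), completing the contrapositive.
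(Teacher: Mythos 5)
Your high-level strategy is the same as the paper's: argue the contrapositive, use the observation from Proposition~\ref{prop:necessary} that the types occurring in the transition semigroup are exactly the types occurring among the generators, and then show that a missing type $\{i,j\}$ forces some atom below its bound. You even arrive at the right choice of $S$ (take $i,j$ both in $S$; the paper takes $S=\{i,j\}$). But the step where you must exhibit a concrete family of states, counted by the bound of Proposition~\ref{prop:atombounds}, that becomes unreachable is not correctly executed, and the invariant you propose in its place does not hold. You suggest that the forbidden states are those $(X,Y)$ with $i\in X$ and $j\in Y$, on the grounds that reaching them requires ``separating'' $i$ from $j$. That is not the right obstruction: after applying a word $w$ the state is $(Sw,\ol{S}w)$, and a state with $i\in Sw$ and $j\in\ol{S}w$ is reached by any transformation sending some element of $S$ to $i$ and some element of $\ol{S}$ to $j$ --- no type-$\{i,j\}$ transformation is needed, and membership of $i,j$ in the two components is not preserved in any useful way after the first letter. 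Likewise your remark about needing a type-$\{i,j\}$ move ``to grow $X$ by colliding'' cannot be right, since images only shrink: $|Xw|\le|X|$ always.

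The correct (and much shorter) identification is by \emph{cardinality}, not by membership of $i$ and $j$. With $S=\{i,j\}$, the bound counts states $(X,Y)$ with $|X|=1$ and $|Y|=n-2$. Reaching such a state from the initial state $(\{i,j\},\ol{\{i,j\}})$ requires a word $w$ with $|\{i,j\}w|=1$, $|\ol{\{i,j\}}w|=n-2$, and the two images disjoint; that is, $w$ must induce a transformation of rank $n-1$ whose unique unified pair is exactly $\{i,j\}$, i.e., a transformation of type $\{i,j\}$. Since the transition semigroup contains none, all such states are unreachable and $A_{\{i,j\}}$ falls strictly below its bound. Your write-up circles this idea but never lands on it, so as it stands the proof has a genuine gap.
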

\begin{proof}
We claim that to meet the bounds, $\Sig$ must contain a letter inducing a transformation of type $\{i,j\}$ for each $\{i,j\} \subseteq Q_n$. There are $\binom{n}{2}$ of these subsets.

To see this, fix $\{i,j\}$ and suppose no letter induces a transformation of type $\{i,j\}$. Then the transition semigroup of the minimal DFA of $L$ does not contain a transformation of type $\{i,j\}$, since there is no transformation of this type in its generating set.

Now, let $S = \{i,j\}$ and consider the atomic intersection $A_S$. The initial state of the DFA for this atomic intersection is $(\{i,j\},\ol{\{i,j\}})$. From this state, there is no way to reach a state of the form $(X,Y)$ with $|X| = 1$ and $|Y| = n-2$, since this would require a transformation of type $\{i,j\}$. Hence $A_S$ cannot have maximal complexity. \qed
\end{proof}

\section{Star}

\begin{proposition}[Star]
\label{prop:star}
Let $\cD_n(a,b)$ be the DFA of Definition~\ref{def:most_complex} and let $L_n(a,b)$ be its language. Then the complexity of $(L_n(a,b))^*$ is $2^{n-1}$.
\end{proposition}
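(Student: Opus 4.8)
The plan is to construct the standard NFA for $(L_n(a,b))^*$, determinize it, and then (i) use the non-returning property to bound the number of reachable subsets from above by $2^{n-1}$, and (ii) show this bound is met by exhibiting enough reachable states and verifying they are pairwise distinguishable.

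First I would form an NFA $\cN$ from $\cD_n(a,b)$ in the usual way: adjoin a new initial state $0'$, make it accepting, let its outgoing transitions copy those of state $0$; keep all transitions of $\cD_n(a,b)$; and for each state $q$ and each letter $\sigma\in\{a,b\}$ with $q\sigma=n-1$ (the unique final state of $\cD_n(a,b)$), add a ``restart'' transition $q\xrightarrow{\sigma}0$. Then $L(\cN)=(L_n(a,b))^*$. The point of the non-returning property appears when we determinize $\cN$: since state $0$ has no incoming transitions in $\cD_n(a,b)$, a state $0$ can only be put into a subset by a restart transition, and such a transition fires exactly when some element is mapped onto $n-1$; conversely, whenever $n-1$ enters a subset the corresponding restart transition also puts $0$ into it. Hence in every reachable subset $P\ne\{0'\}$ of the subset automaton $\cD^*$ we have $0\in P$ if and only if $n-1\in P$. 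Moreover $0'$ occurs only in the initial subset $\{0'\}$ (it has no incoming transitions in $\cN$), and since every state of $\cN$ has a successor on every letter, the empty subset is unreachable, so no sink state is needed.

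The upper bound is then immediate: every reachable state other than $\{0'\}$ is determined by its trace $R=P\cap\{1,\dots,n-1\}$ (with $0\in P$ iff $n-1\in R$), and $R$ ranges over the $2^{n-1}-1$ nonempty subsets of $\{1,\dots,n-1\}$; together with $\{0'\}$ this gives at most $2^{n-1}$ states. For reachability I would describe the induced action on subsets $R\subseteq\{1,\dots,n-1\}$: the letter $a$ acts as the $(n-1)$-cycle $(1,2,\dots,n-1)$, while $b$ acts as the transposition $(1,2)$ except on subsets $R$ with $n-1\in R$ and $1\notin R$, where instead it inserts one new element ($2$ if $2\notin R$, and $1$ if $2\in R$), thereby increasing $|R|$ by one. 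Starting from $\{0'\}$ one reaches $\{1\}$ by $a$, hence all singletons by powers of $a$; then, by induction on $|R|$, one reaches every nonempty $R$: from all subsets of size $k-1$ the size-increasing action of $b$ (applied after rotating $n-1$ into place) produces all size-$k$ subsets that contain both $2$ and $n-1$, and these can then be moved onto an arbitrary size-$k$ subset using the permutations generated by the rotation $a$ and the transposition $(1,2)$ (which together generate the full symmetric group on $\{1,\dots,n-1\}$).

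Distinguishability is the easy part. A subset state is accepting precisely when it contains $n-1$, and $a$ acts as a pure permutation (the $(n-1)$-cycle) on these states; so if $R\ne R'$ are reachable subsets of $\{1,\dots,n-1\}$ and $q$ lies in their symmetric difference, say $q\in R\setminus R'$, then a power $a^j$ with $qa^j=n-1$ sends exactly one of the two states to an accepting state. The initial state $\{0'\}$ is separated from every subset state by short words: the only subset state that is accepting and whose $a$-successor could match that of $\{0'\}$ is $\{n-1\}$ (carrying $0$), and reading $b$ distinguishes it from $\{0'\}$ since $\{0'\}b=\{2\}$ is rejecting while $b$ grows $\{n-1\}$ to an accepting set. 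The main obstacle is the inductive reachability step: because the only size-increasing moves come from applying $b$ to sets containing $n-1$, and $b$ simultaneously performs the transposition $(1,2)$, one must carefully interleave rotations and transpositions while avoiding exactly the configurations in which $b$ grows a set rather than transposing it, in order to confirm that every subset of $\{1,\dots,n-1\}$ is genuinely reached; the smallest cases ($n=4,5$) may have to be checked by hand here.
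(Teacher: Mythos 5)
Your setup is sound and essentially parallels the paper's: you build the star NFA, observe that in the subset automaton the state $0$ (respectively the new initial state) is slaved to $n-1$ (respectively appears only initially), so that every non-initial reachable subset is coded by a nonempty $R\subseteq\{1,\dots,n-1\}$, giving the upper bound $2^{n-1}$ directly; your description of the induced actions ($a$ acts as the cycle $(1,\dots,n-1)$ on traces, $b$ acts as the transposition $(1,2)$ except that it inserts a new element when $n-1\in R$ and $1\notin R$) is correct, and your distinguishability argument via powers of $a$ is fine and matches the paper's.

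However, there is a genuine gap at the heart of the argument, and you have located it yourself without closing it. Your inductive step asserts that the size-$k$ sets produced by the growing move (all of which contain $2$ and $n-1$) "can then be moved onto an arbitrary size-$k$ subset using the permutations generated by the rotation $a$ and the transposition $(1,2)$, which together generate the full symmetric group." This does not follow: although $\langle (1,\dots,n-1),(1,2)\rangle$ is the full symmetric group, the letter $b$ realizes the transposition only on sets $R$ with $n-1\notin R$ or $1\in R$; on every other set it is a non-injective, size-increasing map. So you cannot freely compose rotations with transpositions -- every intermediate set along your word must avoid the configuration $n-1\in R,\ 1\notin R$ whenever you apply $b$, and it is not at all obvious that a suitable safe word exists for every pair of $k$-subsets. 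This is precisely where the paper does its real work: its proof is a five-case induction that tracks the cyclic "pattern" of the set, repeatedly shifts the pattern with $a^{n-1}$ and applies $b$ to close the gap between the two smallest elements one unit at a time, with separate cases according to whether $1$, $2$, $n-2$, $n-1$ belong to the target. Deferring this to "carefully interleave rotations and transpositions" and "check $n=4,5$ by hand" leaves the central combinatorial claim unproved; as written, the reachability of all $2^{n-1}-1$ nonempty subsets of $\{1,\dots,n-1\}$ is not established.
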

\begin{proof}
The upper bound of $2^{n-1}$ on the complexity of star was established in~\cite{EHJ16}.
We use the construction of~\cite{EHJ16} for an NFA for $(L_n(a,b))^*$ as shown in
Figure~\ref{fig:NFA_star}: state 0 has been made final and a $b$-transition has been added from state $n-1$ to state 2.
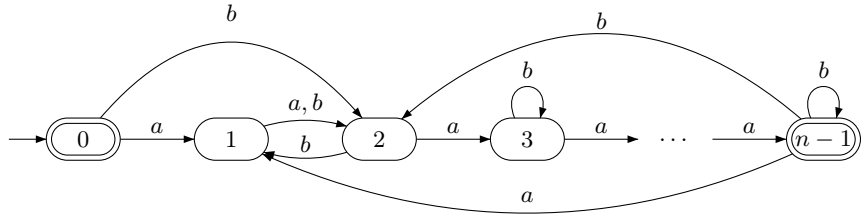
\begin{figure}[th]
\unitlength 8.0pt
\begin{center}\begin{picture}(37,10)(0,3)
\gasset{Nh=2.0,Nw=3.5,Nmr=1.25,ELdist=0.4,loopdiam=1.5}
\node(0)(1,7){0}\imark(0)\rmark(0)
\node(1)(8,7){1}\
\node(2)(15,7){2}
\node[Nframe=n](3dots)(29,7){$\dots$}
\node(3)(22,7){$3$}
	{\small
\node(n-1)(36,7){$n-1$}\rmark(n-1)
	}
\drawedge[curvedepth= 0,ELdist=.3](0,1){$a$}
\drawedge[curvedepth= 3.5,ELdist=-1](n-1,1){$a$}
\drawedge[curvedepth= -5,ELdist=-1](n-1,2){$b$}
\drawedge[curvedepth= .9,ELdist=.3](1,2){$a,b$}
\drawedge[curvedepth= .9,ELdist=-1.1](2,1){$b$}
\drawedge[curvedepth= .0,ELdist=.3](2,3){$a$}
\drawedge[curvedepth= .0,ELdist=.3](3,3dots){$a$}
\drawedge[curvedepth= 0,ELdist=.3](3dots,n-1){$a$}
\drawedge[curvedepth= 4.6,ELdist=1.0](0,2){$b$}

\drawloop(3){$b$}
\drawloop(n-1){$b$}
\end{picture}\end{center}
\caption{NFA for star of $L_n(a,b)$.}
\label{fig:NFA_star}
\end{figure}

State $\{0\}$ is initial and for $q\in\{1, \dots,n-1\}$, the set $\{q\}$ is reachable by $a^q$.

We prove by induction on the size of the set that all nonempty subsets of 
$\{1,\dots,n-1\}$ are reachable. 
We can think about the elements of the set $S=\{q_1,\dots,q_k\}$, $1\le q_1 < \cdots <q_k \le n-1$, as a pattern of states that are in the set. Applying $a$ adds  1 to each state or moves the pattern one position to the right cyclically. Applying $a^{n-2}$ subtracts 1 from each state, or moves the pattern one position to the left cyclically.
\be
\item
$q_k < n-1$.
To reach set $S=\{q_1,\dots,q_k\}$, start with set $T=\{q_2,\dots,q_k\}$, which is reachable by the induction assumption. 
Note that $q_2\ge 2$.
Apply $a^{n-1-q_k}$ so that the pattern is shifted to the right until $q_k$ reaches $n-1$.
Use $b$ to add 2. 
Now we have the required $k$ elements, but the distance between 2 and $q_2+n-1-q_k$ may be larger than that between $q_1$ and $q_2$. 
It cannot be smaller because $n-1 \ge q_k-q_1 +2$, and so $(q_2 + (n-1-q_k)) - 2 \ge q_2-q_1$.
\smallskip

If $q_2+(n-1-q_k)-2 = q_2-q_1$, then the distance between $2$ and $q_2+(n-1-q_k)$ is equal to the distance between $q_1$ and $q_2$. Thus we reach $S$ if we shift the pattern by a word in $a^*$ that sends $2$ to $q_1$.
\smallskip

Otherwise, we have $q_2+(n-1-q_k)-2 > q_2-q_1$, that is, $n-1 > q_k-q_1 +2$.
Use $a^{n-1}$ to move the pattern, including 2, to the left one position. 
Since $q_2 \ge q_1 +1$, we have $n-1 > q_k-q_2 +3$ and 
$q_2+(n-1-q_k)-1 > 2$.
Now use $b$  to move 1 to 2. This decreases the distance between 2 and 
$q_2+(n-1-q_k)$ by one. 
This can be repeated.
If we reach the pattern in which we have 2, and $q_2+(n-1-q_k)$ has been moved to 3, then we are done, because the distance between $q_2$ and $q_1$ must be at least one.
\item
$q_1 > 1$, $q_k=n-1$. A shift to the left by $a^{n-2}$ brings us to Case (1). After reaching $\{q_1-1,\dotsc,q_k-1\}$, the desired set can be reached by $a$. 
\item
$q_1=1$, $q_2=2, q_k=n-1$. 
Use $\{q_2=2, q_3, \dots,q_{k-1}, q_k =n-1\} \stackrel{b}{\longrightarrow}
\{1,2, q_3, \dots,q_{k-1}, q_k =n-1\} $.
\item
$q_1=1, q_2 > 2, q_{k-1}=n-2, q_k=n-1$. Applying $a$ we get a pattern with
$\{q_1=1, q_2=2, q_3, \dots, q_k = n-1\}$. This is reachable by Case (3). The desired pattern  is then reached by $a^{n-2}$.
\item
$q_1=1, q_2 > 2, q_{k-1} < n-2, q_k=n-1$.
We have  $\{1,q_2, \dots, n-1\} \stackrel{a}{\longrightarrow} \{1,2,q_2+1,\dots, q_{k-1}+1\}$. The latter set is reachable by Case (1), and then the desired set, by $a^{n-2}$.
\ee

For distinguishability, $\{0\}$ is the only set with $a^{n-1}$ as its shortest accepted non-empty word. If two subsets of $\{1,\dots,n-1\}$ differ by state $q$, then $a^{n-1-q}$ distinguishes them. Thus our claim holds.
\qed
\end{proof}

\section{Product}

When dealing with binary operations, to avoid confusion between the sets of states $\{0,\dots, m-1\}$ and $\{0,\dots,n-1\}$ we use 
$\mathcal{D}'_m(\Sig)=(Q'_m,\Sigma,\delta'_m, 0', \{(m-1)'\})$, 
and $\mathcal{D}_n(\Sig)=(Q_n,\Sigma,\delta_n, 0, \{n-1\})$,
where $Q'_m = \{0',\dots, (m-1)'\}$.

\begin{proposition}[Restricted Product]   
\label{prop:restricted_product}
Let $\cD_n(a,b,c)$ be the DFA of Definition~\ref{def:most_complex} and let $L_n(a,b,c)$ be its language. Then for $m,n \ge 4$  the complexity of 
$L'_m(a,b) L_n(a,-,b)$ is $(m-1)2^{n-1}+1$.
\end{proposition}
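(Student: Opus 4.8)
The plan is to use the standard NFA construction for concatenation and then count reachable and distinguishable states in its determinization. First I would recall the construction: take the DFA $\cD'_m(a,b)$ for $L'_m(a,b)$ and the DFA $\cD_n(a,-,b)$ for $L_n(a,-,b)$, add $\eps$-transitions from every final state of $\cD'_m$ (only $(m-1)'$) to the initial state $0$ of $\cD_n$, and make $(m-1)'$ non-final. Determinizing, the states are pairs $(i', Y)$ where $i' \in Q'_m$ and $Y \subseteq Q_n$; whenever the first component reaches $(m-1)'$ we are forced to include $0$ in $Y$. Since $L_n$ is non-returning, once we leave $0'$ in the first component, $0$ never re-enters $Y$ unless $(m-1)'$ is in the current first component. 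The key counting observation: for $i' \ne (m-1)'$ the reachable subsets $Y$ lie in $Q_n \setminus \{0\}$ (there are $2^{n-1}$ of them including $\emp$), while for $i' = (m-1)'$ we must have $0 \in Y$, forcing $Y = \{0\} \cup Y'$ with $Y' \subseteq Q_n \setminus \{0\}$, again $2^{n-1}$ choices; but the initial state $0'$ with $Y = \emp$ is special. Careful bookkeeping gives the upper bound $(m-1)2^{n-1} + 1$, matching the known bound from~\cite{EHJ16} — though since the problem asks for equality, I would cite that upper bound and concentrate on reachability and distinguishability.

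For reachability I would proceed in two stages. First, using the $a$-transitions alone, reach all states $(i', \emp)$ for $i' \in \{1', \dots, (m-2)'\}$ and, once the first component hits $(m-1)'$, states of the form $((m-1)', \{0\})$ and then $(1', \{1\})$, etc.; the letter $a$ restricted to $Q_n \setminus \{0\}$ acts as the cycle $(1,\dots,n-1)$ plus the reset $0 \to 1$. Second, fixing the first component and varying $Y$: I would show that within a fixed first component $i' \ne (m-1)'$, every nonempty subset $Y \subseteq Q_n \setminus \{0\}$ is reachable, using essentially the same inductive pattern-shifting argument as in Proposition~\ref{prop:star} (the letters $a$ and $b$ restricted to $\{1,\dots,n-1\}$ generate the full symmetric group, so in fact any subset can be produced from any other of the same size, and $b$'s behavior $0b = 2$, together with visits to $(m-1)'$ which inject $0$, lets us change sizes). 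The point is that each pass of the first component through $(m-1)'$ adds the element $0$, which then becomes $1$ or $2$ under $a$ or $b$, so we can grow $Y$; this is where the non-returning structure is actually exploited. For distinguishability, I would use that each state $q$ of $\cD_n(a,-,b)$ and each state $i'$ of $\cD'_m$ is distinguished by a word in $a^*$ (as noted at the start of the proof of Theorem~\ref{thm:leftideals}), and lift this: two states $(i', Y) \ne (j', Z)$ of the determinized NFA are separated either by a word that drives the first component to acceptance (if $i' \ne j'$) or, if $i' = j'$ but $Y \ne Z$, by picking $q$ in the symmetric difference and applying a word in $a^*$ that sends $q$ to $n-1$ while keeping everything else away from $n-1$ — here I must be a little careful because $a$ is a cyclic permutation of $\{1,\dots,n-1\}$ so I should instead use a word mapping exactly one chosen element to $n-1$; the permutation-plus-rank argument again supplies such a word.

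The main obstacle I anticipate is the reachability of states with a nonempty second component paired with a \emph{non-final} first component $i' \notin \{0', (m-1)'\}$: to get $0$ into the second component we must route the first component through $(m-1)'$, but once there the first component leaves $(m-1)'$ on the next letter, and I need to verify that I can independently steer $i'$ back to any desired value while simultaneously shaping $Y$ into any prescribed nonempty subset of $Q_n \setminus \{0\}$. Handling this cleanly requires organizing the induction on $|Y|$: base case $|Y| = 1$ reached by cycling the first component around to $(m-1)'$ and back; inductive step using $b$ (which maps $0 \to 2$) to introduce a fresh element while a symmetric-group word on $\{1,\dots,n-1\}$ rearranges the rest, and $a^{m-1}$-type loops on the first component to reset it. I expect the argument to parallel Proposition~\ref{prop:star} closely, with the $(m-1)$ possible first components replacing the single final-state bookkeeping and contributing the factor $(m-1)$ in $(m-1)2^{n-1}+1$.
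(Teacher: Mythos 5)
Your overall architecture (the concatenation NFA from~\cite{EHJ16}, citing the upper bound, then reachability plus distinguishability in the subset automaton) matches the paper's, and your state count is correct. But the reachability argument rests on a false premise. You claim that in $\cD_n(a,-,b)$ the letters $a$ and $b$ restricted to $\{1,\dots,n-1\}$ generate the symmetric group, so that ``any subset can be produced from any other of the same size.'' In $L_n(a,-,b)$ the letter $b$ plays the role of $c$, i.e.\ it acts as $(2,\dotsc,n-1)(1\to 2)$ on $\{1,\dots,n-1\}$: both $1$ and $n-1$ map to $2$, so $b$ is not a permutation of $\{1,\dots,n-1\}$, and $\langle a,b\rangle$ restricted to that set is not the symmetric group. (You may be confusing this with the \emph{first} DFA $\cD'_m(a,b)$, where $b$ is the transposition $(1',2')$ and the symmetric-group claim does hold.) This is exactly why the paper cannot reuse the pattern-shifting argument of the star proposition and instead proves Lemma~\ref{lem:ab} about inverses under $(ab)^k$ and runs a delicate double induction on the parity of the smallest element of $S$ and on the number of elements below its smallest odd element.

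The second, related gap is the coupling you yourself flag: every letter acts on both components simultaneously, so you cannot ``reset'' the first component with $a^{m-1}$-type loops without applying a nontrivial permutation $a^{m-1}$ to the second component (nontrivial unless $n-1$ divides $m-1$). The paper's resolution is to exploit that $ab$ \emph{fixes} $1'$ in the first DFA while shifting elements of the second component by $2$, which is what makes the $(ab)^k$ lemma usable; your sketch has no substitute for this. On distinguishability, your plan for $Y\ne Z$ is fine ($a^{n-1-r}$ already works, since $a$ is a permutation of $\{1,\dots,n-1\}$ and sends only $r$ to $n-1$), but ``drive the first component to acceptance'' is not enough when the second components are equal: reaching $(m-1)'$ does not by itself accept; one must show that the element injected into the second component by the final state (the $1$ added by $b$ from $(m-1)'$, which cannot arise from $Ub\subseteq\{2,\dots,n-1\}$) produces second components that differ, as the paper does with $a^{m-1-q}b$. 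As written, the proposal does not close these steps.
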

The language $L_n(a,-,b)$ is recognized by the DFA $\cD_n(a,-,b)$. This DFA has alphabet $\{a,b\}$ and transformations $a \co (1,\dotsc,n-1)(0 \to 1)$ and $b \co (2,\dotsc,n-1)(1 \to 2)(0 \to 1)$. That is, $b$ induces the same transformation in $\cD_n(a,-,b)$ that $c$ induces in $\cD_n(a,b,c)$.

We first establish a lemma.
\begin{lemma}
\label{lem:ab}
Let $a \co (1,\dotsc,n-1)(0 \to 1)$ and $b \co (2,\dotsc,n-1)(1 \to 2)(0 \to 1)$. Each element of $\{1+2k,\dotsc,n-1\}$ has exactly one inverse under the transformation $(ab)^k$, and the inverse of $q$ is $q - 2k$. In particular, if $S \subseteq \{1+2k,\dotsc,n-1\}$, then $S(ab)^{-k} = \{ q - 2k : q \in S \}$.
\end{lemma}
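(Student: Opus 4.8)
The plan is to compute the transformation $ab$ explicitly and then trace preimages backwards step by step. Using the paper's right action $q(ab)=(qa)b$, one gets $q(ab)=q+2$ for $0\le q\le n-3$, while $(n-2)(ab)=(n-1)(ab)=2$; in particular the image of $ab$ is $\{2,\dots,n-1\}$. The single fact I will use repeatedly is the following: if $3\le s\le n-1$, then $s$ has a \emph{unique} preimage under $ab$, namely $s-2$ (the value $2$ has the three preimages $0,n-2,n-1$, and $0,1$ have none, so these are the only states where $ab$ fails to be injective or surjective onto them).

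For existence of the claimed inverse, I would check that for $q\in\{1+2k,\dots,n-1\}$ the chain $q-2k,\ q-2k+2,\ \dots,\ q-2,\ q$ is a legitimate $ab$-orbit: every entry other than the last is at most $q-2\le n-3$ and at least $q-2k\ge 1$, hence lies in $\{0,\dots,n-3\}$, so each of the $k$ applications of $ab$ merely adds $2$; therefore $(q-2k)(ab)^k=q$.

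For uniqueness, suppose $p(ab)^k=q$ and put $p_0=p$ and $p_{i+1}=p_i(ab)$, so that $p_i=p(ab)^i$ and $p_k=q$. I would show by downward induction on $i$, from $k$ to $0$, that $p_i=q-2(k-i)$; the base case $i=k$ is immediate. For the inductive step, note that for each $i$ with $1\le i\le k$ we have $p_i=q-2(k-i)\ge q-2(k-1)\ge(1+2k)-2(k-1)=3$, so $p_i$ lies in the range where $ab$ has a unique preimage; since $p_{i-1}(ab)=p_i$, this forces $p_{i-1}=p_i-2=q-2(k-(i-1))$, completing the induction. Taking the conclusion down to $i=0$ gives $p=p_0=q-2k$, so $q$ has exactly one $(ab)^k$-inverse. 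The ``in particular'' statement is then immediate: for $S\subseteq\{1+2k,\dots,n-1\}$ each $q\in S$ contributes the single preimage $q-2k$, so $S(ab)^{-k}=\{q-2k:q\in S\}$.

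I do not expect any genuine obstacle here; the only place requiring care is the bookkeeping on index ranges. The point to get right is that the hypothesis $q\ge 1+2k$ is exactly what guarantees that the backward trace stays at states $\ge 3$ for all $k$ steps, which is where $ab$ is injective, while $q\le n-1$ keeps the forward orbit inside $\{0,\dots,n-1\}$ without wrapping through $n-1$. Verifying these two inequalities at the extreme indices ($i=1$ for the lower bound, $i=k-1$ for the upper) is the whole content of the argument.
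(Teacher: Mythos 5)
Your proof is correct and follows essentially the same route as the paper's: both trace the preimage chain backward one $ab$-step at a time, with the key point being that $q\ge 1+2k$ keeps every intermediate value at least $3$, where $ab$ has a unique preimage. The only cosmetic differences are that you precompute the transformation $ab$ explicitly (rather than inverting $b$ and then $a$ separately) and organize the backward trace as a downward induction on the step index instead of an induction on $k$.
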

\begin{proof}
We proceed by induction on $k$. 
For $k = 0$, note that $(ab)^0$ is the identity transformation. Thus each $q \in \{1,\dotsc,n-1\}$ has $q-2k = q$ as its unique inverse under $(ab)^0$.

Now, assume $k \ge 1$ and the result holds for all values less than $k$.
Suppose $p(ab)^k = q$, that is, $p$ is an inverse of $q$ under $(ab)^k$.
By the induction hypothesis, $q$ has a unique inverse under $(ab)^{k-1}$, and this inverse is $q-2(k-1)$.
Thus $p(ab) = q(ab)^{-(k-1)} = q-2(k-1)$.
Now, since $1+2k\le q \le n-1$ and $k \ge 1$, we have $3 \le q-2(k-1) \le n-1$.
Since $b$ is a transformation of type $\{1,n-1\}$ and $1b = (n-1)b = 2$, the only element that does not have a unique inverse under $b$ is $2$.
Thus $q-2(k-1)$ has a unique inverse under $b$, and this inverse is $q-2(k-1)-1$.
We have $2 \le q-2(k-1)-1 \le n-2$.
Since $a$ is a permutation when restricted to $\{1,\dotsc,n-1\}$, it follows that $q-2(k-1)-1$ has a unique inverse under $a$, and this inverse is $q-2(k-1)-2 = q-2k$.
Thus we must have $p = q-2k$. \qed
\end{proof}

\begin{proof}[Proposition \ref{prop:restricted_product}]

The upper bound of $(m-1)2^{n-1}+1$ was established in~\cite{EHJ16}.
The NFA construction that was
used in~\cite{EHJ16} for the product is shown in Figure~\ref{fig:restricted_product}.
State 0 can be omitted and its transitions replaced by appropriate transitions from the final state of $\cD'_m$ to states 1, 2 and $n-1$.
We will show that $\{0'\}$ and all subsets of the form $\{p'\} \cup S$, where $p' \in Q'_m \setminus \{0'\} = \{1',\dots,(m-1)'\} $ and $S\subseteq Q_n \setminus \{0\} = \{1,\dots,n-1\}$, are reachable and pairwise distinguishable.

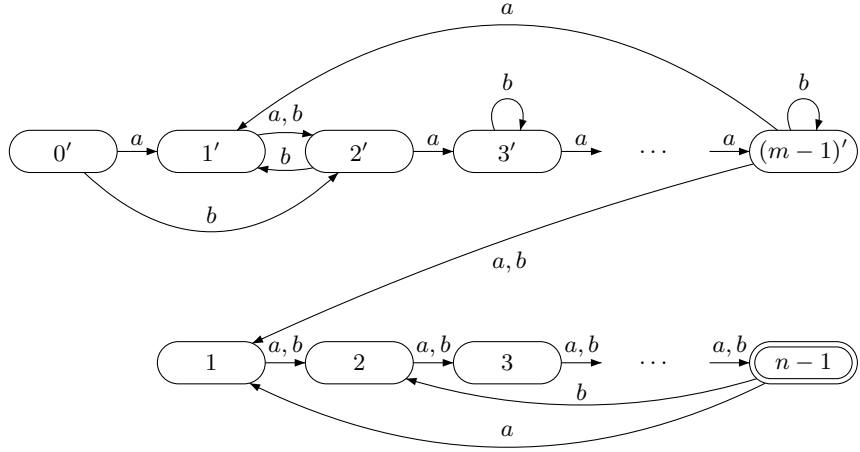
\begin{figure}[th]
\unitlength 8.0pt
\begin{center}\begin{picture}(37,24)(0,1)
\gasset{Nh=2.0,Nw=5.1,Nmr=1.25,ELdist=0.4,loopdiam=1.5}
{\small
\node(0')(1,16){$0'$}
\node(1')(8,16){$1'$}\
\node(2')(15,16){$2'$}
\node[Nframe=n](3dots')(29,16){$\dots$}
\node(3')(22,16){$3'$}	
\node(m-1')(36,16){$(m-1)'$}

\drawedge[curvedepth= 0,ELdist=.3](0',1'){$a$}
\drawedge[curvedepth= -6.,ELdist=-1](m-1',1'){$a$}
\drawedge[curvedepth= .9,ELdist=.3](1',2'){$a,b$}
\drawedge[curvedepth= .9,ELdist=-1.1](2',1'){$b$}
\drawedge[curvedepth= .0,ELdist=.3](2',3'){$a$}
\drawedge[curvedepth= .0,ELdist=.3](3',3dots'){$a$}
\drawedge[curvedepth= 0,ELdist=.3](3dots',m-1'){$a$}
\drawedge[curvedepth= -3.8,ELdist=.4](0',2'){$b$}
\drawloop(3'){$b$}
\drawloop(m-1'){$b$}
\node(1)(8,6){$1$}\
\node(2)(15,6){$2$}
\node[Nframe=n](3dots)(29,6){$\dots$}
\node(3)(22,6){$3$}	
\node(n-1)(36,6){$n-1 $}\rmark(n-1)

\drawedge[curvedepth= 4,ELdist=-1](n-1,1){$a$}
\drawedge[curvedepth= 2,ELdist=-1](n-1,2){$b$}
\drawedge[curvedepth= .0,ELdist=.3](1,2){$a,b$}
\drawedge[curvedepth= .0,ELdist=.3](2,3){$a,b$}
\drawedge[curvedepth= .0,ELdist=.3](3,3dots){$a,b$}
\drawedge[curvedepth= 0,ELdist=.3](3dots,n-1){$a,b$}
\drawedge[curvedepth= -0.8,ELdist=.3](m-1',1){$a,b$}
}
\end{picture}\end{center}
\caption{NFAs for product of $L'_m(a,b)$ and $L_n(a,-,b)$ .}
\label{fig:restricted_product}
\end{figure}

Set $\{0'\}$ is the initial state,  and $\{p' \}$ is reached by $a^p$.
Hence our claim holds for $|S| = 0$.
Next, we prove the claim for sets $\{p'\} \cup S$ with $|S| = 1$.
First consider sets of the form $\{1',q\}$, $q \in Q_n \setminus \{0\}$. 
\bi
\item
First note that $\{1'\} \tr{a^{m-1}} \{1',1\}$.
\item
We have $\{1',q\} \tr{a} \{2',q+1\} \tr{b} \{1',q+2\}$ for $q+2 \le n-1$. By starting from $\{1',1\}$ and repeatedly applying $ab$, we can reach all sets $\{1',q\}$ with $q$ odd.
\item
If $n-1$ is odd, we can reach $\{1',n-1\}$ and then $\{1',n-1\} \tr{a} \{2',1\} \tr{b} \{1',2\}$.
Then by repeatedly applying $ab$ we reach all sets $\{1',q\}$ with $q$ even, and we are done.
\item
If $n-1$ is even, $n-2$ is odd, so we can reach $\{1',n-2\}$. Then $\{1',n-2\} \tr{b} \{2',n-1\} \tr{b} \{1',2\}$. By repeatedly applying $ab$ we reach all sets $\{1',q\}$ with $q$ even.
\ei
Now, note that $a$ restricted to $Q_n \setminus \{0\}$ is a permutation, so each $q \in Q_n \setminus \{0\}$ has an inverse under $a$.
We can reach $\{1',qa^{-(p-1)}\}$ and then by $a^{p-1}$ we reach $\{p',q\}$. 
Thus $\{p',q\}$ is reachable for each $p' \in Q'_m \setminus \{0'\}$ and $q \in Q_n\setminus \{0\}$.

We prove the claim for the remaining sets $\{p'\} \cup S$ by induction, taking the case $|S| = 1$ as our base case.
Suppose $|S| > 1$, and suppose $\{p'\} \cup T$ is reachable whenever $p' \in Q'_m \setminus \{0'\}$ and $|T| < |S|$. 
We refer to this as the \e{primary induction hypothesis}, and the claim as the \e{primary claim}, since we will need to do another inductive proof within this proof.

First we consider the case where $p' = 1'$. Our argument for this case will be split into sub-cases based on the smallest element of $S$.
If $1$ is the smallest element of $S$, then by the primary induction hypothesis, we can reach $\{1'\} \cup T$ for $T = (S \setminus \{1\})a^{-(m-1)}$.
Thus $\{1'\} \cup T \tr{a^{m-1}} \{1',1\} \cup (S \setminus \{1\}) = \{1'\} \cup S$ is reachable.

Suppose $q = 1+2k$ is the smallest element of $S$.
By Lemma \ref{lem:ab}, since $S \subseteq \{1+2k,\dotsc,n-1\}$, each element of $S$ has an unique inverse under $(ab)^k$. 
Thus the size of $S(ab)^{-k}$ has the same size as $S$.
Furthermore, $q-2k = 1$ is the inverse of $q$ under $(ab)^k$, and thus $1 \in S(ab)^{-k}$.
In particular, $1$ is the smallest element of $S(ab)^{-k}$, so we can reach $\{1'\} \cup S(ab)^{-k}$. 
Also, $1'$ is fixed by $ab$.
Thus $\{1'\} \cup S(ab)^{-k} \tr{(ab)^k} \{1'\} \cup S$ is reachable.

This shows $\{1'\} \cup S$ is reachable whenever the smallest element of $S$ is odd.
We deal with the case where the smallest element of $S$ is even in two steps.
First, we reduce to the case where $S$ contains at least one odd element.
Assume that $\{1'\} \cup T$ is reachable whenever $T$ contains at least one odd element and $|T| \le |S|$, and assume $S$ contains no odd elements.
Let $q = 2s$ be the smallest element of $S$.
Observe that $S \setminus \{q\} \subseteq \{1+2s,\dotsc,n-1\}$ and thus by Lemma \ref{lem:ab}, the set $(S \setminus \{q\})(ab)^{-s}$ has the same size as $S \setminus \{q\}$.

If $n-1$ is odd, we can reach $\{1',n-1\} \cup (S \setminus \{q\})(ab)^{-s}$, since $\{n-1\} \cup (S \setminus \{q\})(ab)^{-s}$ has the same size as $S$ and contains an odd element.
Then since $n-1 \tr{ab} 2 \tr{(ab)^{s-1}} 2 + 2(s-1) = 2s = q$, applying $(ab)^s$ gives $\{1'\} \cup S$.

If $n-1$ is even, we can reach $\{1',n-2\} \cup (S \setminus \{q\})(ab)^{-s}$. Since $n-2 \tr{a} n-1 \tr{b} 2$, we have $n-2 \tr{ab} 2 \tr{(ab)^{s-1}} 2s = q$ and thus applying $(ab)^s$ gives $\{1'\} \cup S$.

Finally, we deal with the case where $S$ contains at least one odd element.
Let $p$ be the smallest odd element of $S$.
Let $T = \{ q \in S : p \le q \}$, the set of elements above (and including) the smallest odd element.
Thus $S \setminus T$ is the set of elements of $S$ that are strictly below the smallest odd element (in particular, each element of $S \setminus T$ is even).
We prove that $\{1'\} \cup S$ is reachable by induction on the size of $S \setminus T$.
This is the \e{secondary claim}.

The base case is $|S \setminus T| = 0$. Here every element of $S$ is greater than or equal to $p$.
Thus we are in the case where the smallest element of $S$ is odd, which we have solved.

Suppose that whenever $|S \setminus T| < k$, we can reach $\{1'\} \cup S$.
That is, whenever there are fewer than $k$ elements strictly below the smallest odd element of $S$, we can reach $\{1'\} \cup S$.
This is the \e{secondary induction hypothesis}.
We want to prove the secondary claim holds when $|S \setminus T| = k$ for $k \ge 1$.

Let $r$ be the smallest element of $S$. Since $k \ge 1$, we have $r \in S \setminus T$ and thus $r$ is even. Suppose $r = 2s$. 
Then $S \setminus \{r\} \subseteq \{1+2s,\dotsc,n-1\}$, so by Lemma \ref{lem:ab}, every element of $S \setminus \{r\}$ has a unique inverse under $(ab)^s$.

In $S$, the number of elements strictly below the smallest odd element is $|S \setminus T| = k$. 
Say a set $U$ is \e{nice} if the number of elements strictly below the smallest odd element of $U$ is \e{strictly less than} $k$.
Consider the set $S \setminus \{r\}$. Since $r$ is strictly below the smallest odd element of $S$, it follows that $S \setminus \{r\}$ is nice. 
If we apply $(ab)^{-s}$ to $S \setminus \{r\}$, this simply subtracts $2s$ from every element of $S \setminus \{r\}$, and so preserves parity. 
Thus $(S \setminus \{r\})(ab)^{-s}$ is also nice.
Finally, if we take the union $(S \setminus \{r\})(ab)^{-s} \cup \{n-1\}$, this set is nice, since $n-1$ is the largest element of this set and so cannot possibly be strictly below anything.

The secondary induction hypothesis says precisely that $\{1'\} \cup U$ is reachable whenever $U$ is nice.
Thus $\{1',n-1\} \cup (S \setminus \{r\})(ab)^{-s}$ is reachable.
Notice that $n-1 \tr{(ab)^s} 2s = r$.
Hence by $(ab)^s$ we reach $\{1'\} \cup S$.

This proves the \e{secondary claim}.
We have thus shown that if each state $\{p'\} \cup T$ with $|T| < |S|$ is reachable, then each state $\{1'\} \cup S$ is reachable.
To see that each $\{p'\} \cup S$ is reachable, reach $\{1'\} \cup Sa^{-(p-1)}$ and apply $a^{p-1}$ to reach $\{p'\} \cup S$.
This completes the proof of the \e{primary claim}, and shows that $(m-1)2^{n-1}+1$ states are reachable.

Now, we show that all reachable states are distinguishable.
Let $\{p'\} \cup S$ and $\{q'\} \cup T$ be two distinct states.
Suppose $S \ne T$ and suppose $r \in S \oplus T$, the symmetric difference of $S$ and $T$.
Then $a^{n-1-r}$ distinguishes the states.

Now suppose $S = T$ and $p' \ne q'$. Assume without loss of generality that $p < q$.
Define $U = Sa^{m-1-q}$. Then we have
\[ \{q'\} \cup S \tr{a^{m-1-q}} \{(m-1)'\} \cup U \tr{b} \{(m-1)',1\} \cup Ub. \]
Observe that $(Q_n \setminus\{0\})b = \{2,\dotsc,n-1\}$.
Thus $Ub \subseteq \{2,\dotsc,n-1\}$ and in particular, $Ub \cap \{1\} = \emp$.

On the other hand, let $r' = (p+m-1-q)'$.
If we apply $a^{m-1-q}b$ to $\{p'\} \cup S$, we reach
\[ \{p'\} \cup S \tr{a^{m-1-q}} \{r'\} \cup U \tr{b} \{r'\}b \cup Ub. \]
Note that since $p < q$, we have $r = p+m-1-q < m-1$. Hence $b$ either fixes $r'$, or if $r' \in \{1',2'\}$, it swaps $r'$ with the other element of $\{1',2'\}$. In either case $r'b < m-1$.
It follows that $\{r'\}b = \{r'b\}$.

Thus $\{p'\} \cup S$ is sent to $\{r'b\} \cup Ub$ by $a^{m-1-q}b$, and $\{q'\} \cup S$ is sent to $\{(m-1)'\} \cup (\{1\} \cup Ub)$, with $Ub \cap \{1\} = \emp$.
These two states have different subsets of $Q_n$, so they are distinguishable.
This proves that all states are distinguishable, and hence the bound is met. \qed
\end{proof}

\begin{proposition}[Unrestricted Product]
\label{prop:unrestricted_product}
For $m,n \ge 4$, let $L'_m$ (respectively, $L_n$) be a non-returning language of complexity $m$ (respectively, $n$)  over an alphabet $\Sig'$, (respectively, $\Sig$). 
Then the complexity of product is at most $m2^{n+1}+1$, and this bound is met by
 $L'_m(a,b)$ and $L_n(a,-,b,d)$.
\end{proposition}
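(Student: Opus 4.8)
The plan is to realise the product $L'_m L_n$ as a determinised concatenation automaton and to count its reachable, pairwise distinguishable states. Since $\Sig' = \{a,b\}$ is contained in $\Sig = \{a,b,d\}$, only the first operand needs adjustment: I would complete $\cD'_m$ over $\Sig' \cup \Sig = \{a,b,d\}$ by adjoining a single empty (dead) state $\emp'$, with $q'd = \emp'$ for every $q'$ and $\emp'\sigma = \emp'$ for every $\sigma$. A state of the product DFA is then a pair $(p,S)$ where $p$ is a state of the completed $\cD'_m$ and $S \subseteq Q_n$, under the usual spawning rule that $0 \in S$ precisely when $p$ is the final state $(m-1)'$. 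To bound the reachable states I would exploit the non-returning property of both operands: the state $0'$ is never re-entered, so it contributes only the single initial pair $(0',\emp)$; and since no transition of $\cD_n$ maps into $0$, the state $0$ lies in $S$ only when $p$ is final, so for every other first component $S$ is a subset of $\{1,\dotsc,n-1\}$. Summing the admissible pairs over the live first components, the dead state $\emp'$, and the lone initial state then yields the upper bound $m2^{n+1}+1$ asserted in the statement.

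For the matching lower bound I would show that the witness pair $L'_m(a,b)$ and $L_n(a,-,b,d)$ attains this count. The key observation is that deleting $d$ reduces the construction exactly to the restricted product $L'_m(a,b)\,L_n(a,-,b)$ of Proposition~\ref{prop:restricted_product}: with only $a$ and $b$, the automaton $\cD'_m$ is already complete, so no dead state arises, and by Proposition~\ref{prop:restricted_product} together with Lemma~\ref{lem:ab} every state $\{0'\}$ and $\{p'\}\cup S$ with $p'\in\{1',\dotsc,(m-1)'\}$ and $S\subseteq\{1,\dotsc,n-1\}$ is reachable. The new letter $d$ is what opens up the dead-state block: from any reachable non-final pair $\{p'\}\cup S$ — for instance $p'=1'$, which is non-final since $m\ge 4$ — a single application of $d$ sends the first component to $\emp'$ while fixing $S$ (because $d$ acts as the identity on $\{1,\dotsc,n-1\}$ in $\cD_n$), reaching $(\emp',S)$ for every $S\subseteq\{1,\dotsc,n-1\}$; and $(\emp',\emp)$ is reached directly from $(0',\emp)$ by $d$. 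This produces the full extra block of states needed to meet the bound.

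It then remains to verify distinguishability. States differing in their $Q_n$-component are separated by a word in $a^{*}$: since $a$ cyclically permutes $\{1,\dotsc,n-1\}$, for any $r$ in the symmetric difference of the two subsets the word $a^{n-1-r}$ drives $r$ to the accepting state $n-1$ of $\cD_n$, exactly as in the restricted case. For states sharing the same $Q_n$-component but differing in their first component, I would adapt the distinguishing argument of Proposition~\ref{prop:restricted_product}, using that the dead-state block can never spawn $0$ (since $\emp'$ is not final), whereas a live first component can still be driven to $(m-1)'$ and spawn; a word exhibiting this asymmetry then separates the two states.

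The main obstacle I anticipate is the bookkeeping around the dead-state block introduced by $d$, rather than the upper-bound count. One must confirm that the pairs $(\emp',S)$ are simultaneously reachable for all $S\subseteq\{1,\dotsc,n-1\}$, that they are distinguishable both from one another and from the restricted-product states, and that the action of $d$ on the $Q_n$-component — which moves $0$ to $2$ when the first component was final, but is otherwise the identity on $\{1,\dotsc,n-1\}$ — never collapses any of the target subsets. Checking this interaction carefully is where the real work lies.
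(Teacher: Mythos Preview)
Your proposal is correct and follows essentially the same route as the paper: reduce to the restricted product of Proposition~\ref{prop:restricted_product} over $\{a,b\}$, use $d$ to access the extra $2^{n-1}$ states (which the paper writes simply as $S$ rather than your $(\emp',S)$), distinguish unequal $Q_n$-parts by $a^{n-1-r}$, and for equal $Q_n$-parts exploit the spawning asymmetry via the concrete word $a^{m-1-p}b$. One minor caveat: you frame the upper bound only for the witness case $\Sig' \subseteq \Sig$, whereas the proposition asserts it for arbitrary $\Sig',\Sig$; the paper phrases the count generally, and letters in $\Sig' \setminus \Sig$ produce no states beyond those already enumerated.
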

Here $L_n(a,-,b,d)$ is the language of DFA $\cD_n(a,-,b,d)$ with transformations $a \co (1,\dotsc,n-1)(0 \to 1)$, $b \co (2,\dotsc,n-1)(1 \ra 2)(0 \ra 1)$ and $d \co (0 \ra 2)$. That is, everything is the same as DFA $\cD_n(a,b,c,d)$, except $c$ is removed and $b$ now induces the transformation that $c$ originally induced.

\begin{proof}
First we derive an upper bound; the derivation is similar to that for regular languages~\cite{Brz16}.
Let $\cD'_m=( Q'_m, \Sig', \delta', 0',F')$  and $\cD_n=(Q_n,\Sig,\delta,0,F) $ be minimal DFAs of non-returning languages $L'_m\subseteq (\Sig')^*$  and $L_n\subseteq \Sig^*$, respectively.
We use the same construction as in the restricted case. 
As in that case, we may possibly  reach the initial state $\{0'\}$ and $(m-1)2^{n-1}$ states using letters in $\Sig'\cap\Sig$.
But starting in any set of the form $\{p'\} \cup S$, where $S\subseteq \{1,\dots,n-1\}$,
and applying a letter from $\Sig\setminus \Sig'$ we may also reach $S$. 
Hence at most 
$(m-1)2^{n-1}+2^{n-1}+1= m2^{n-1}+1$ states can be reached.

Languages $L'_m(a,b)$ and $L_n(a,-,b,d)$ meet this bound. We can reach 
$(m-1)2^{n-1}+1$ states by words in $\{a,b\}^*$. From $\{p'\} \cup S$, we can  use $d$ as described above to reach $S$ and thus reach $2^{n-1}$ more states. It remains to show that all these states are distinguishable.

We will say a state is a $Q'_m$-state if it contains an element of $Q'_m$, and a $Q_n$-state otherwise.
The reachable $Q'_m$-states have the form $\{p'\} \cup S$ with $p' \in Q'_m \setminus \{0'\}$ and $S \subseteq Q_n$.
The reachable $Q_n$-states have the form $S \subseteq Q_n$.

All $Q'_m$-states are all pairwise distinguishable using words over $\{a,b\}^*$, using the same arguments as the restricted case.
Given two distinct $Q_n$-states $S$ and $T$, we can take $r \in S \oplus T$ and distinguish the states using $a^{n-1-r}$, as in the restricted case.

Now let $\{p'\} \cup S$ be a $Q'_m$-state and let $T$ be a $Q_n$-state.
If $S \ne T$, we can take $r \in S \oplus T$ and use $a^{n-1-r}$ to distinguish.
If $S = T$, let $U = Sa^{m-1-p}$ so that 
$\{p'\} \cup S \tr{a^{m-1-p}} \{(m-1)'\} \cup U \tr{b} \{(m-1)',1\} \cup Ub$.
As in the restricted case, we have $Ub \subseteq \{2,\dotsc,n-1\}$.
So $a^{m-1-p}b$ sends $\{p'\} \cup S$ to $\{(m-1)'\} \cup (\{1\} \cup Ub)$, with $Ub \cap \{1\} = \emp$, but it sends $S$ simply to $Ub$.
These two states have different subsets of $Q_n$, so they are distinguishable.

So all $Q'_m$ states are pairwise distinguishable, all $Q_n$-states are pairwise distinguishable, and all $Q'_m$-states are distinguishable from all $Q_n$-states; hence all states are distinguishable. \qed
\end{proof}

\section{Boolean Operations}

\begin{proposition}[Restricted Boolean Operations]   
\label{prop:restricted_boolean}
Let $\cD_n(a,b)$ be the DFA of Definition~\ref{def:most_complex} and let $L_n(a,b)$ be its language. Then for $m,n \ge 4$ and for any proper binary boolean operation $\circ$ the complexity of 
$L'_m(a,b)\circ L_n(b,a)$ is $mn-(m+n-2)$.
If $m \neq n$ then $\kappa(L'_m(a,b)\circ L_n(a,b)) = mn-(m+n-2)$.
\end{proposition}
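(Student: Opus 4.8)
The plan is to analyze the product automaton $\cD'_m(a,b) \times \cD_n(b,a)$ for the two relevant DFAs, reachability and distinguishability, following the standard approach for cross-product constructions but carefully exploiting the non-returning structure. Recall that $\cD_n(b,a)$ is the dialect in which $b$ induces the transformation $(1,\dots,n-1)(0\to 1)$ (i.e., the "$a$" transformation) and $a$ induces $(1,2)(0\to 2)$ (the "$b$" transformation). The key structural fact is that in each automaton, state $0$ (resp.\ $0'$) is only the initial state and has no incoming transitions, so in the product automaton the only reachable states with a $0$-coordinate are $\{(0',0)\}$ together with states reached directly from it by a single letter. Since applying $a$ to $(0',0)$ gives $(1',2)$ and applying $b$ gives $(2',1)$, after one step we leave the "$0$-rows" and "$0$-columns" entirely. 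So the reachable states are $(0',0)$ together with a subset of $\{1',\dots,(m-1)'\}\times\{1,\dots,n-1\}$, giving at most $1 + (m-1)(n-1) = mn - (m+n-2)$ states, which matches the claimed bound; this also re-derives the upper bound.

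For reachability of all $(m-1)(n-1)$ states of the form $(p',q)$ with $p'\neq 0'$, $q\neq 0$: I would first reach $(1',1)$-type anchor states, then use the fact that $a$ acts as the cycle $(1,\dots,(m-1))$ on $Q'_m\setminus\{0'\}$ in the first coordinate while acting as the transposition $(1,2)$ on $Q_n\setminus\{0\}$ in the second, and $b$ does the reverse; together these generate enough maneuverability. Concretely, from $(0',0)$ apply $a$ to get $(1',2)$; powers of $a$ move the first coordinate around its cycle while the second coordinate oscillates between $1$ and $2$ (since $(1,2)^k$ is $1$ or $2$); interleaving with $b$ (which cycles the second coordinate through $\{1,\dots,n-1\}$ and transposes $\{1',2'\}$ in the first) lets us independently position both coordinates. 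I expect a short lemma of the form: for any $p'\in\{1',\dots,(m-1)'\}$ and $q\in\{1,\dots,n-1\}$, there is a word taking $(1',1)$ (or $(1',2)$, or whichever anchor is cleanest) to $(p',q)$ — essentially because the two cyclic group actions on the respective punctured state sets, combined across the two letters, act transitively on the product. This is routine group-action bookkeeping and is the part I'd write out with care but not expect difficulty.

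For distinguishability: two states $(p'_1,q_1)\neq(p'_2,q_2)$. If $q_1\neq q_2$, use a word in $\{b\}^*$ (which acts as the $a$-cycle on the second coordinate in $\cD_n(b,a)$) to send exactly one of them into the final state $n-1$ of the second DFA while the first DFA isn't simultaneously final — here I'd use that the first DFA's behavior under $b^k$ is a transposition on $\{1',2'\}$ so it visits at most two states and in particular for $m\geq 4$ can be kept out of $\{(m-1)'\}$; then pick whether $\circ$ distinguishes "in first only", "in second only", etc., invoking that $\circ$ is a proper (depends on both arguments) binary boolean operation so at least one of the four cells of its truth table differs from its neighbors. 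Symmetrically if $p'_1\neq p'_2$, use a word in $\{a\}^*$ acting as the $a$-cycle on $Q'_m\setminus\{0'\}$. Also $(0',0)$ must be distinguished from every $(p',q)$: since $(0',0)$ is non-returning and every nonempty word leaves it, one shows the language of $(0',0)$ in the product differs from that of any $(p',q)$ — e.g.\ the empty word is accepted from exactly one of them for a suitable $\circ$, or a short word separates them. The genuinely delicate point — and the one I'd flag as the main obstacle — is handling the case $q_1\neq q_2$ (or $p'_1\neq p'_2$) when $\circ$ happens to be, say, symmetric difference or intersection: one must ensure the "other" coordinate can be simultaneously driven to a controlled (final or non-final) value, which is exactly where the constraint $m,n\geq 4$ is used, since with only $2$ or $3$ states the transposition-orbit in one coordinate may be forced to hit the final state at the wrong time. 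I would isolate this as a small case analysis over the four boolean "proper operation" patterns, using $a^k$ and $b^k$ to realize the needed $(\text{final}/\text{non-final})\times(\text{final}/\text{non-final})$ combinations, and note that the $m\neq n$ variant with $L'_m(a,b)\circ L_n(a,b)$ works identically because the cycle lengths $m-1\neq n-1$ on the respective punctured state sets already break the symmetry that would otherwise merge states, so the same reachability/distinguishability arguments go through verbatim.

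Finally, I would note that since the matching upper bound $mn-(m+n-2)$ holds for all non-returning languages (by the $0$-row/$0$-column argument above applied to arbitrary non-returning minimal DFAs), this value is the state complexity of proper boolean operations on non-returning languages and our witnesses are optimal.
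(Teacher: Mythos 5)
Your overall architecture matches the paper's: form the direct product, observe that by the non-returning property $(0',0)$ is the unique reachable state with a $0$-coordinate, so everything reduces to the product of the two ``punctured'' automata on $\{1',\dots,(m-1)'\}$ and $\{1,\dots,n-1\}$, giving the count $1+(m-1)(n-1)=mn-(m+n-2)$. Your computation of the one-step successors $(0',0)\stackrel{a}{\to}(1',2)$ and $(0',0)\stackrel{b}{\to}(2',1)$ is also correct.

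The genuine gap is in the step you describe as ``routine group-action bookkeeping.'' What you actually need is that the subgroup of $S_{m-1}\times S_{n-1}$ generated by the two pairs $(\text{cycle},\text{transposition})$ and $(\text{transposition},\text{cycle})$ is large enough both to act transitively on the product of the punctured state sets (reachability) and to separate every pair of product states under every proper boolean operation (distinguishability). This is a question about which subdirect product of $S_{m-1}\times S_{n-1}$ is generated; since each projection is the full symmetric group, Goursat-type analysis shows the answer depends on the common quotients of $S_{m-1}$ and $S_{n-1}$, and it is emphatically not uniform in $m,n$: the paper invokes a published theorem of Bell, Brzozowski, Moreira and Reis for exactly this, and even that theorem fails to apply when $(m,n)\in\{(4,5),(5,4),(5,5)\}$ (where $S_3$ and $S_4$, with its extra normal subgroup, appear), forcing the authors to verify those cases by computation. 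A proof that claims the transitive/separating behaviour holds for all $m,n\ge 4$ by elementary maneuvering, without ever meeting these cases, has skipped the real content. Relatedly, your distinguishability sketch for $q_1\ne q_2$ via words in $b^*$ breaks down when $p_1'=p_2'=(m-1)'$: the letter $b$ fixes $(m-1)'$, so the first coordinate stays final and, for $\circ=\cup$, both states accept no matter what $b^k$ does to the second coordinates; you must first move the first coordinate with $a$, which simultaneously perturbs the second coordinates, and controlling both at once is again the group-generation question. The same issue affects your one-line dismissal of the $m\ne n$, same-alphabet case. None of this makes the statement false, but as written the proposal asserts rather than proves the proposition's core, and the assertion that it is routine and exception-free is incorrect.
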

\begin{proof}
The upper bound was established in~\cite{EHJ16}.
For the lower bound,
Figure~\ref{fig:restricted boolean} shows the two argument DFAs.
As usual we construct their direct product.
State $(0',0)$ is initial and can never be reached again.
If we apply $a$,  we reach state $(1',2)$, and the states reachable from this state form the direct product  of DFA
$\mathcal{E}'_{m-1}(a,b)=(\{1,\dots,(m-1)'\},\{a,b\},\delta', 1', \{(m-1)'\})$ and DFA
$\mathcal{E}_{n-1}(b,a)=(\{1,\dots,n-1\},\{a,b\},\delta, 2, \{n-1\})$, where $\delta'$ and $\delta$ are $\delta_{m'}$ and $\delta_n$ restricted to $Q'_m\setminus\{0'\}$ and 
$Q_n\setminus \{0\}$.
Since the transition semigroups of $\cE'_m$ and $\cE_n$ are the symmetric groups $S_m$ and $S_n$, respectively, 
the result from~\cite[Theorem 1]{BBMR14} applies, except in the cases where $(m,n)$ is in
$\{(4,5), (5,4), (5,5)\}$, which have been verified by computation.
Our first  claim follows for the remaining cases by~\cite[Theorem 1]{BBMR14}. 
If $m\neq n$, \cite[Theorem 1]{BBMR14} applies to $\cD'_m(a,b)$ and $\cD_n(a,b)$, and the second claim follows.
In both cases the direct product of $\cE'_m$ and $\cE_n$ has $(m-1)(n-1)$ states; hence in the direct product of $\cD'_m$ and $\cD_n$ there are 
$(m-1)(n-1)+1= mn-(m+n-2)$ states.
By~\cite[Theorem 1]{BBMR14} all these states are reachable and pairwise distinguishable for every proper operation $\circ$.

Finally, note also that $(0',0)$ is distinguishable from all other states. Since $(0',0)a = (1',2)$ and $(1',2)a^{-1} = \{(m-1)',1)\}$, we see that $(0',0)$ is distinguishable from $(p',q) \ne ((m-1)',1)$ by first applying $a$, then applying a word that distinguishes $(0',0)a = (1',2)$ from $(p',q)a$. It is distinguishable from $((m-1)',1)$ by first applying $b$, then applying a word that distinguishes $(0',0)b = (2',1)$ from $((m-1)',1)b = ((m-1)',2)$.
\qed
\end{proof}

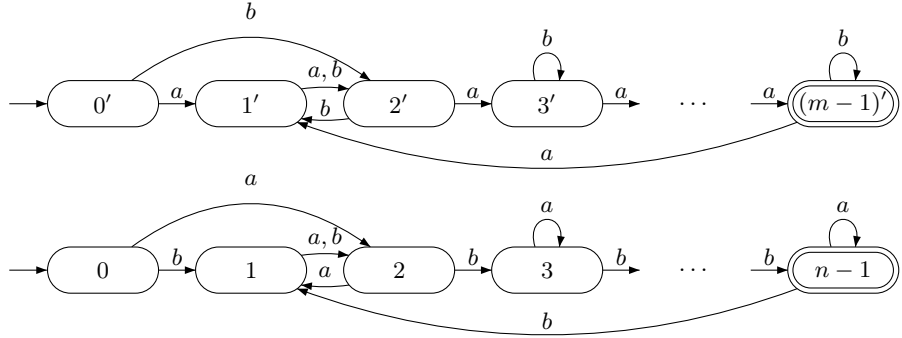
\begin{figure}[th]
\unitlength 7.0pt
\begin{center}\begin{picture}(37,16)(1.6,5)
\gasset{Nh=2.5,Nw=6,Nmr=1.25,ELdist=0.4,loopdiam=1.5}
{\small
\node(0')(1,16){$0'$}\imark(0')
\node(1')(9,16){$1'$}\
\node(2')(17,16){$2'$}
\node[Nframe=n](3dots)(33,16){$\dots$}
\node(3')(25,16){$3'$}	
\node(m-1)(41,16){$(m-1)'$}\rmark(m-1)
\drawedge[curvedepth= 0,ELdist=.3](0',1'){$a$}
\drawedge[curvedepth= 3.5,ELdist=-1](m-1,1'){$a$}
\drawedge[curvedepth= .9,ELdist=.3](1',2'){$a,b$}
\drawedge[curvedepth= .9,ELdist=-1.1](2',1'){$b$}
\drawedge[curvedepth= .0,ELdist=.3](2',3'){$a$}
\drawedge[curvedepth= .0,ELdist=.3](3',3dots){$a$}
\drawedge[curvedepth= 0,ELdist=.3](3dots,m-1){$a$}
\drawedge[curvedepth= 3.6,ELdist=1.0](0',2'){$b$}
\drawloop(3'){$b$}
\drawloop(m-1){$b$}
\node(0)(1,7){0}\imark(0)
\node(1)(9,7){1}\
\node(2)(17,7){2}
\node[Nframe=n](3dots)(33,7){$\dots$}
\node(3)(25,7){$3$}
\node(n-1)(41,7){$n-1$}\rmark(n-1)
\drawedge[curvedepth= 0,ELdist=.3](0,1){$b$}
\drawedge[curvedepth= 3.5,ELdist=-1.2](n-1,1){$b$}
\drawedge[curvedepth= .9,ELdist=.3](1,2){$a,b$}
\drawedge[curvedepth= .9,ELdist=-1.1](2,1){$a$}
\drawedge[curvedepth= .0,ELdist=.3](2,3){$b$}
\drawedge[curvedepth= .0,ELdist=.3](3,3dots){$b$}
\drawedge[curvedepth= 0,ELdist=.3](3dots,n-1){$b$}
\drawedge[curvedepth= 3.6,ELdist=1.0](0,2){$a$}
}
\drawloop(3){$a$}
\drawloop(n-1){$a$}

\end{picture}\end{center}
\caption{DFAs $\cD'_m(a,b)$ and $\cD_n(b,a)$ for boolean operations.}
\label{fig:restricted boolean}
\end{figure}

\begin{proposition}[(Unrestricted Boolean Operations)]
\label{prop:unrestricted_boolean}
For $m,n \ge 4$, let $L'_m(\Sig')$ (respectively, $L_n(\Sig)$) be a non-returning language of complexity $m$ (respectively, $n$)  over an alphabet $\Sig'$, (respectively, $\Sig$). 
Then the complexity of union and symmetric difference is $mn+1$ 
and this bound is met by $L'_m(a,b,c)$ and $L_n(b,a,d)$; 
the complexity of difference is $mn-n+1$, and this bound is met by $L'_m(a,b,c)$ and
$L_n(b,a)$; the complexity of intersection is $mn-(m+n-2)$ and this bound is met by  
$L'_m(a,b)$ and $L_n(b,a)$.
\end{proposition}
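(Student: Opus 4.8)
The plan is to handle the three cases by the standard two-step method for unrestricted operations used in~\cite{Brz16}: an upper bound from a product construction on augmented DFAs, then a matching lower bound for the stated witnesses. For the upper bound, let $\cD'_m$ and $\cD_n$ be minimal non-returning DFAs of $L'_m\subseteq(\Sig')^*$ and $L_n\subseteq\Sig^*$, and view both over $\Sig'\cup\Sig$; each missing letter sends every state to a fresh empty state, giving non-returning DFAs with $m+1$ and $n+1$ states. In their direct product $(0',0)$ is never re-entered, and after the first transition the first coordinate permanently avoids $0'$ and the second permanently avoids $0$, so the reachable states lie in $\{(0',0)\}\cup(\{1',\dots,(m-1)',\emptyset'\}\times\{1,\dots,n-1,\emptyset\})$ — at most $mn+1$. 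This is the bound for $\cup$ and $\oplus$. For difference $L'_m\setminus L_n = L'_m\cap\ol{L_n}$ the reachable states whose first coordinate is the empty state recognize the empty language and merge into one (and for the stated witness $\Sig\subseteq\Sig'$, so the first factor needs no empty state), which gives the difference bound. For intersection, $L'_m\cap L_n\subseteq(\Sig'\cap\Sig)^*$, so letters outside $\Sig'\cap\Sig$ only add states that collapse to the empty state, and the bound reduces to the equal-alphabet case of Proposition~\ref{prop:restricted_boolean} with $\circ=\cap$.

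For the lower bounds I would analyze the direct products of the stated witnesses. In each case, applying $a$ to $(0',0)$ reaches a state from which the reachable ``core'' is exactly the direct product of the symmetric-group DFAs $\cE'_{m-1}$ and $\cE_{n-1}$ on $\{1',\dots,(m-1)'\}$ and $\{1,\dots,n-1\}$, as in the proof of Proposition~\ref{prop:restricted_boolean}: the restrictions of $a$ and $b$ act there as $((1,\dots,m-1),(1,2))$ and $((1,2),(1,\dots,n-1))$, which generate all of $S_{m-1}\times S_{n-1}$ by Goursat's lemma — a graph of an isomorphism $S_{m-1}\cong S_{n-1}$ is excluded since the cyclic generators have orders $m-1,n-1\ge 3$ whereas the transpositions have order~$2$. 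Hence \cite[Theorem~1]{BBMR14} supplies $(m-1)(n-1)$ reachable, pairwise distinguishable core states for every proper operation, with $(m,n)\in\{(4,5),(5,4),(5,5)\}$ checked by computation. The remaining reachable states are then produced by the private letters: in the $\cup/\oplus$ witness $c$ collapses $\cD_n$ and $d$ collapses $\cD'_m$, which reaches every $(\emptyset',q)$, $(p',\emptyset)$ and $(\emptyset',\emptyset)$; in the difference witness $c$ alone reaches the states $(p',\emptyset)$. Distinguishability of $(0',0)$ from all other states follows as at the end of the proof of Proposition~\ref{prop:restricted_boolean}.

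The hard part will be distinguishing the ``partially dead'' states — those with exactly one coordinate an empty state — from one another and from the core states in the $\cup$ and $\oplus$ cases. I would separate them via the private letters: for instance $(1',\emptyset)$ accepts $c^{m-2}$ (which routes $1'$ to $(m-1)'$ in $\cD'_m(a,b,c)$), while no state $(\emptyset',q)$ accepts any word containing $c$, and symmetrically with $d$; together with the $a^k$-distinguishers inside each slice this separates all $mn+1$ states. The remaining points — the Goursat argument when $m=n$, the precise bookkeeping of which ``dead'' states survive the merge for difference, and the reduction of intersection to the restricted case — are routine.
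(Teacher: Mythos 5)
Your overall architecture is the same as the paper's: augment both DFAs with empty states, bound the reachable part of the direct product by $mn+1$ after discarding the unreachable $(p',0)$ and $(0',q)$ states, get the $(m-1)(n-1)$-state core from \cite[Theorem 1]{BBMR14} (with the small cases by computation), and then use the private letters $c$ and $d$ to reach and separate the states with an empty coordinate. The Goursat digression is harmless but unnecessary, since the citation to \cite{BBMR14} already carries that load.

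There is one concrete step that would fail as stated: the claim that the $c$/$d$ separations ``together with the $a^k$-distinguishers inside each slice'' finish distinguishability for union and symmetric difference. In the witness $\cD_n(b,a,d)$ the letter $a$ acts as the transposition $(1,2)(0\to 2)$, so it \emph{fixes} every state $q$ with $3 \le q \le n-2$; consequently no word in $a^*$ distinguishes $(p',\emp)$ from $(p',q)$ for such $q$ --- both trajectories have identical first coordinates and permanently non-final second coordinates. You need words in $b^*$ (the $(1,\dotsc,n-1)$-cycle on the second coordinate) to drive $q$ into the final state $n-1$ while the first coordinate stays non-final, and for the corner cases where the first coordinate is already final you need mixed words such as $db^*$ and $ca^*$, exactly as in the paper's case analysis ($(\emp',n-1)$ vs.\ final states of $S$ by $ca^*$, $((m-1)',\emp)$ vs.\ final states of $S$ by $db^*$). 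A smaller point: your ``merge the $(\emp',\cdot)$ states'' derivation of the general difference bound yields $mn-n+2$, whereas the paper's unreachability argument (the operation's alphabet is $\Sig'$, so no letter of $\Sig\setminus\Sig'$ is available) yields $mn-n+1$; your parenthetical does recover the correct count for the stated witness since there $\Sig\subseteq\Sig'$, but the general-bound bookkeeping should be done the paper's way to match the claimed $mn-n+1$.
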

\begin{proof}
First we derive upper bounds for four boolean operations.
Let $\cD'_m(\Sig')= ( Q'_m, \Sig', \delta'_m, 0',F')$  and 
$\cD_n(\Sig) = (Q_n, \Sig,\delta_n, 0, F)$ be minimal DFAs for non-returning languages $L'_m(\Sig')$ and $L_n(\Sig)$  of complexity $m$ and $n$, respectively.
Assume that $\Sig'\setminus \Sig $ and 
$\Sig \setminus \Sig' $ are non-empty; this assumption leads to the largest upper bound.
We  add an empty state $\emp'$ to $\cD'_m$ to send all transitions under the letters from 
$\Sig \setminus \Sig' $ to that state; thus we get an $(m+1)$-state DFA  $\cD'_{m,\emp'}$. Similarly, we add an empty state $\emp$  to $\cD_n$ to get $\cD_{n,\emp}$.
Now we construct the direct product of $\cD'_{m,\emp'}$ and $\cD_{n,\emp}$. 
There are at most  $(m+1)(n+1)$ states in this direct product. 
Note, however, that the $m$ states of the form $(p',0)$, $p\neq 0$, and the $n$ states of the form $(0',q)$, $q\neq 0$ are not reachable. Thus we have an upper bound of $(m+1)(n+1)-(m+n)= mn+1$. 
We will show that this bound can be reached for union and symmetric difference.
However, the alphabet of the difference $L'_{m,\emp'} \setminus L_{n,\emp}$ is
$\Sig'$ and states of the form $(\emp',q)$,  $q \in \{1,\dots,n-1\} \cup \{\emp\}$, which are reachable only by letters in $\Sig \setminus \Sig'$, are not reachable in the difference DFA. Hence the bound reduces to $mn -n+1$. 
Similarly, the alphabet of the intersection $L'_{m,\emp'} \cap L_{n,\emp}$ is
$\Sig'\cap \Sig$ and states of the form $(p',\emp)$, $p \in \{1',\dots, (m-1)'\} \cup \{\emp'\}$, $(\emp',q)$, $q \in \{1,\dots,n-1\} \cup \{\emp\}$ are not reachable. Hence the bound reduces to $mn -(m+n-2)$, which is the same as the bound in the restricted case. 

\begin{figure}[th]
\unitlength 8.0pt
\begin{center}\begin{picture}(37,19.5)(0,4)
\gasset{Nh=2.0,Nw=5.1,Nmr=1.25,ELdist=0.4,loopdiam=1.5}
{\small
\node(0')(1,16){$0'$}
\node(1')(8,16){$1'$}\
\node(2')(15,16){$2'$}
\node[Nframe=n](3dots)(29,16){$\dots$}
\node(3')(22,16){$3'$}	
\node(m-1)(36,16){$(m-1)'$}\rmark(m-1)

\drawedge[curvedepth= 0,ELdist=.3](0',1'){$a,c$}
\drawedge[curvedepth= 4,ELdist=-1](m-1,1'){$a$}
\drawedge[curvedepth= 2,ELdist=-1](m-1,2'){$c$}
\drawedge[curvedepth= .9,ELdist=.3](1',2'){$a,b,c$}
\drawedge[curvedepth= .9,ELdist=-1.1](2',1'){$b$}
\drawedge[curvedepth= .0,ELdist=.3](2',3'){$a,c$}
\drawedge[curvedepth= .0,ELdist=.3](3',3dots){$a,c$}
\drawedge[curvedepth= 0,ELdist=.3](3dots,m-1){$a,c$}
\drawedge[curvedepth= 5,ELdist=.4](0',2'){$b$}
\drawloop(3'){$b$}
\drawloop(m-1){$b$}
\node(0)(1,7){0}\imark(0)
\node(1)(8,7){1}\
\node(2)(15,7){2}
\node[Nframe=n](3dots)(29,7){$\dots$}
\node(3)(22,7){$3$}
\node(n-1)(36,7){$n-1$}\rmark(n-1)
\drawedge[curvedepth= 0,ELdist=.3](0,1){$b$}
\drawedge[curvedepth= 3.5,ELdist=-1.2](n-1,1){$b$}
\drawedge[curvedepth= .9,ELdist=.3](1,2){$a,b$}
\drawedge[curvedepth= .9,ELdist=-1.1](2,1){$a$}
\drawedge[curvedepth= .0,ELdist=.3](2,3){$b$}
\drawedge[curvedepth= .0,ELdist=.3](3,3dots){$b$}
\drawedge[curvedepth= 0,ELdist=.3](3dots,n-1){$b$}
\drawedge[curvedepth= 4,ELdist=.6](0,2){$a,d$}
}
\drawloop(1){$d$}
\drawloop(2){$d$}
\drawloop(3){$a,d$}
\drawloop(n-1){$a,d$}

\end{picture}\end{center}
\caption{DFAs $\cD'_m(a,b,c)$ and $\cD_n(b,a,d)$ for unrestricted boolean operations.}
\label{fig:unrestricted_boolean}
\end{figure}
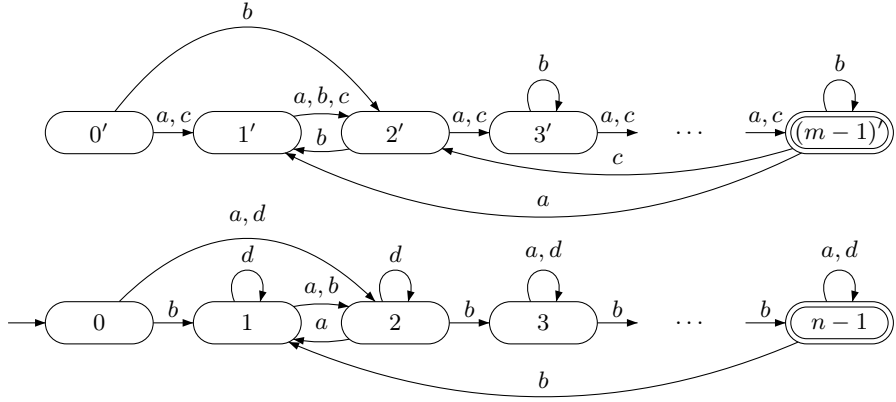

We now claim that our witnesses of Figure~\ref{fig:unrestricted_boolean} 
meet this bound. 

\noin
{\bf Union} Consider the direct product of $\cD'_m$ and $\cD_n$.
State $(0',0)$ is initial and reaches $(1',2)$ by $a$.
By~\cite[Theorem 1]{BBMR14}  and computation in the cases where $(m,n)$ is in $\{(4,5), (5,4), (5,5)\}$ all $(m-1)(n-1)$ states of the form $(p',q)$, 
$p' \not\in \{0', \emp'\}$, and $q\not \in \{\emp,0\}$ are reachable and pairwise distinguishable.
The states that have an empty component are reachable as follows:
$(\emp',1)$ can reached from any reachable state $(q',1)$ by $d$, and 
$(\emp',1) \stackrel{ b^{q-1} }{ \longrightarrow } (\emp',q)$, for $q = 1, \dots, n-1$.
The state $(1',\emp)$ is reached from $(0',0)$ by $c$, and 
$(1',\emp) \stackrel{ a^{p-1} }{ \longrightarrow } (p',\emp)$, for $p = 1, \dots, m-1$.
Finally, $(\emp',\emp)$ is reached from any state $(\emp', q)$ by $c$.
Thus we have a total of $1+(m-1)(n-1) + (m-1) +(n-1) +1= mn+1$ states.

Let $S=\{(p',q) \mid 1\le p\le m-1, 1\le q \le n-1\}$, 
$V=\{(p',\emp) \mid 1\le p\le m-1\}$, and 
$H = \{(\emp', q) \mid 1\le q \le n-1\}$.
States in $H$ are distinguishable by words in $b^*$, and those in
$V$, by words in $a^*$.
State $(\emp',\emp)$ is the only empty state; hence it is distinguishable from every other reachable state. 
State $(0',0)$ is distinguished from every other reachable non-final state because only it accepts $a^{m-1}$.
Non-final states in $H$ are distinguished from those in $S$ and $V$ by words in $a^*$.
Non-final states in $V$ are distinguished from those in $S$ by words in $b^*$. 
State $(\emp',n-1)$ is distinguished from $((m-1)',\emp)$ by $d$, and from final states in $S$ by words in $ca^*$.
State $((m-1)',\emp)$ is distinguished  from final states in $S$ by words in $db^*$.

\noin
{\bf Symmetric Difference} 
Now state $((m-1)',n-1)$ is no longer final, but the rest of the argument is the same as for union. 

\noin
{\bf Difference} Here states $( p', n-1)$, $p'\in \{1',\dots,(m-1)'\} \cup  \{\emp'\}$ are no longer final. Since the alphabet of $L'_m\setminus L_n$ is $\Sig'$, states in $H$ are  not reachable, and there are only $mn-n+1$ reachable states.
The states in $S$ are pairwise distinguishable by~\cite[Theorem 1]{BBMR14} and those in $V$ by the argument used for union.
The arguments used for union also apply here to distinguish states in $S$ from those in $V$.

\noin
{\bf Intersection} Now the alphabet of $L'_m\cap L_n$ is $\Sig'\cap \Sig$, and  neither the states in $H$ nor those in $V$ are reachable. The claim now follows by~\cite[Theorem 1]{BBMR14}.
\qed
\end{proof}

The complexity of any other binary boolean operation can be determined from
the complexities of union, intersection, difference and symmetric difference; 
however, the complexity of $L_m\circ L'_n$ may differ by 1 from 
the complexity of $\ol{L_m \circ L'_n}$.
For more details see~\cite{BrSi16}.

\section{Conclusions}

We have shown that there exists a most complex non-returning language stream
$(L_4(\Sig), \dots, L_n(\Sig),\dots)$. 
The cardinality of the syntactic semigroup of $L_n(\Sig)$ is $(n-1)^n$ and its atoms are have the highest state complexity possible for non-returning languages; both of these bounds can be reached only if $\Sig$ has at least $\binom{n}{2}$ letters.
The bounds for the common restricted operations, however, can be met by streams over $\{a,b,c\}$ or $\{a,b\}$: 
$\kappa(L_m(a,b) \circ L_n(b,a) = mn - (m+n-2)$ for all proper boolean operations $\circ$; 
$\kappa(L_n(a,b))^*= 2^{n-1}$;
$\kappa(L_n(a,b,c)^R)=2^n$;
and $\kappa(L'_m(a,b)L_n(a,-,b)) = (m-1)2^{n-1} + 1$.
The bounds for unrestricted boolean operations can be met by $L'_m(a,b,c)$ and $L_n(b,a,d)$, whereas those for the unrestricted product, by $L'_m(a,b)$ and
$L_n(a,-,b,d)$.
\smallskip

\noin
{\bf  Acknowledgment}
We are very grateful to Corwin Sinnamon for careful proofreading and constructive comments.

\providecommand{\noopsort}[1]{}


\end{document}